\numberwithin{equation}{section}
\newtheorem{theorem}{Theorem}[section]
\newtheorem{remark}[theorem]{Remark}
{\begin{trivlist}\item[]\textbf{Proof#1 }}%
{\qed\end{trivlist}}
\newcommand{\R}{\mathbb{R}}
\newcommand{\Rmnum}[1]{\uppercase\expandafter{\romannumeral #1\relax}}
\newcommand{\rmO}{\mathrm{O}}
\newcommand{\rmd}{\mathrm{d}}
\newcommand{\rme}{\mathrm{e}}
\newcommand{\rmi}{\mathrm{i}}
\renewcommand{\Re}{\mathrm{Re}}
\renewcommand{\Im}{\mathrm{Im}}
\renewcommand{\leq}{\leqslant}
\renewcommand{\geq}{\geqslant}
\def\eps{\varepsilon}
\font\tenbi=cmmib10 scaled \magstep1 \font\sevenbi=cmmib10 at 11pt
\font\fivebi=cmmib10 at 6pt \textfont\bifam = \tenbi
\begin{document}

%%%%%%%%%%%%%%%%%%%%%%%%%%%%%%%%%%%%%%%%%%%%%%%%%%%%%%
% Title Page
%%%%%%%%%%%%%%%%%%%%%%%%%%%%%%%%%%%%%%%%%%%%%%%%%%%%%%
\begin{center}
{\fontsize{15}{15}\fontfamily{cmr}\fontseries{b}\selectfont
Instability of anchored spirals in geometric flows
}\\[0.2in]
A. Cortez$^{1}$, N. Li$^{1}$, N. Mihm$^{1}$, A. Xu$^{2}$, X. Yu$^{3}$, and A.  Scheel$^{1}$\\[0.1in] %authors are not listed alphabetically, but may be intentional?
\textit{\footnotesize
$^1$School of Mathematics, University of Minnesota, Minneapolis, 206 Church St SE, MN 55414, USA\\[0.05in]
$^2$Department of Mathematics, Rice University, 6100 Main Street
Houston, TX 77005, USA\\[0.05in] % removal of room number consistent with other colleges. 
$^3$Department of Mathematics and Statistics, Pomona College, 333 N. College Way,
Claremont, CA 91711, USA
}
\end{center}

\begin{abstract}
\noindent 
We investigate existence, stability, and instability of anchored rotating spiral waves in a model for geometric curve evolution. We find existence in a parameter regime limiting on a purely eikonal curve evolution. We study stability and instability theoretically, in the aforementioned limiting regime, and numerically. We find convective  and absolute oscillatory instability, as well as saddle-node bifurcations. Our results in particular shed light on the instability of spiral waves in reaction-diffusion systems caused by an instability of wave trains against transverse modulations.
\end{abstract}

%%%%%%%%%%%%%%%%%%%%%%%%%%%%%%%%%%%%%%%%%%%%%%%%%%%%%%
\section{Introduction}
%%%%%%%%%%%%%%%%%%%%%%%%%%%%%%%%%%%%%%%%%%%%%%%%%%%%%%

Spiral waves are fascinating self-organized structures that arise in a plethora of settings ranging from microscopic slime mold aggregation patterns \cite{MR982387} to arrhythmias in cardiac tissue \cite{krinsky1978mathematical} and to shapes of galaxies \cite{Aschwanden2018}. Beyond their fascinating shape and their occurrence in many different areas, spirals are intriguing and relevant due to their robustness and their impact on collective behavior as pacemakers. Once formed, spiral waves are often very difficult to destroy, a point of concern in cardiac arrhythmias
\cite{agladze2007interaction}. This observed strong robustness and stability is unfortunately not reflected in mathematical results on existence and stability, which are rare and often incomplete. Existence has been understood in oscillatory media, with a gauge symmetry \cite{MR3604605,MR4151207,aguareles2023rigorous,MR4215710,MR588502,MR665385,hetebrij2021parameterization} and more generally near Hopf bifurcations
\cite{scheelspiral}, in excitable media with some caveats (see \cite{bernoff} and references therein), near non-Ising-Bloch instabilities \cite{hagberg_nib}, in phase oscillators \cite{MR3556834,nanth},
and in geometric models (see \cite{li2024anchoredspiralsdrivencurvature} and references therein).
Stability results are rare with practically no complete results available in the setting of an unbounded domain. Some asymptotic results on spectral stability are derived in \cite{MR665385}, and the comparison structure of curvature evolution was exploited in \cite{li2024anchoredspiralsdrivencurvature} to give a linear and nonlinear stability result. Conceptual results on existence, robustness, the relation between unbounded and large bounded domains, as well as properties of the linearization in unbounded and large bounded domains together with lists of possible instabilities were presented in \cite{sandstede2021spiral}.

From the perspective of the work in \cite{sandstede2021spiral}, we are interested here in the effect of \emph{transverse instabilities}. Wave trains emitted by the spiral resemble concentric circles, or planar waves for large distances from the core. The spectrum of the linearization at a spiral is therefore decomposed into the essential spectrum, induced and completely determined by spectral properties of those planar wave trains, and the point spectrum, induced by instabilities originating at a finite distance from the spiral core. Transverse instabilities are induced by instabilities related to the essential spectrum, caused by modulations of the planar wave trains in the direction perpendicular to the direction of propagation. Curiously, however, modulations in this transverse direction at an infinite distance from the center of rotation rotate with infinite frequency, so that this ``essential spectrum'' is in fact invisible, located at $\pm\rmi\infty$. In other, more technical terms, linearizing at spiral waves in a corotating frame yields a linearized operator that generates a strongly continuous semigroup. In case of a transverse instability, the spectrum of the generator can have a non-positive real part, yet the semigroup can grow exponentially, that is, the spectral mapping theorem fails; see  \cite{engelnagel} for background and \cite[Lem.~3.27]{sandstede2021spiral} for a precise statement. The results here explore precisely this curious situation and identify clusters of eigenvalues that actually precede the crossing of the essential spectrum at $\pm\rmi\infty$, reminiscent of the crossing of eigenvalues near edges of the absolute spectrum identified in \cite{ss_curvature}.

The present work presents results on existence, stability, and instability, in a geometric setting pioneered and then further developed in \cite{MIKHAILOV1991379,MR1257848,YAMADA1993153,MR1629103,MR2341216}, generalizing in particular the work in \cite{li2024anchoredspiralsdrivencurvature}, so that both stability and instability are possible. In particular, we prove the existence of both stable and unstable spiral waves in annuli with inner and outer radii $R_\mathrm{i}$ and $R_\mathrm{o}$, respectively, that is, for curves evolving inside $\Omega=\{R_\mathrm{i}<r<R_\mathrm{o}\}\subset \R^2$ and attached to the boundary $\partial\Omega$,
both for $R_\mathrm{o}$ large and $R_\mathrm{o}=\infty$. Our setting is a reduced model where the spiral wave field is reduced to a sharp interfacial curve that evolves according to intrinsic geometric quantities while being anchored with suitable boundary conditions at the boundary $\partial\Omega$. More explicitly, we postulate that a curve evolves with a normal velocity that is given through a function $c(\kappa,\kappa_{ss})$, where $\kappa$ is the (signed) curvature of the planar curve and $\kappa_{ss}$ is its second derivative with respect to arclength. Of course, one could also include higher derivatives in $s$ or even nonlocal operators. We however specifically focus on
\begin{equation}\label{e:geomlaw}
    c=V+D_2\kappa-D_4 \kappa_{ss},
\end{equation}
postulating a linear relationship in $\kappa$ and $\kappa_{ss}$ which simplified calculations and captures the leading-order effects of instability that we focus on. Geometric evolution laws involving derivatives of curvature  have been studied in the literature in many contexts, referred to as elastic curves associated with an energy $\int\kappa^2$ or as surface diffusion; see for instance \cite{MR1888641} for background on curve evolution equations. More relevant for our context are descriptions of fronts near long-wavelength transverse instabilities;  see~\cite{SIVASHINSKY19771177} for a description via modulation theory (the Kuramoto-Sivashinsky equation), \cite{frankel1987nonlinear} for a derivation of geometric evolution in terms of curvatures, including our particular model as a linearized version, equation (4.5) there, 
and \cite{MR2235817} for further references and connections to intrinsic, geometric equations. 

We orient the curve originating at the inner boundary outward, choose a right normal, and define a signed curvature using unit tangent $T(s)$ and normal $V(s)$ as functions of arclength via $\kappa=\langle N,\dot{T}\rangle$; see Figure~\ref{fig:line-tension}.
% fig:line-tension
\begin{figure*}[!b]
    \centering
    \scalebox{.7}{
    \begin{minipage}{0.38\textwidth}
           \begin{tikzpicture}[font=\Large,scale=2]
    \draw (0,0) circle (1cm);
    \draw[Latex-Latex] (0,0) circle (1pt) [fill] -- (-.707,-.707) node [midway,above left,xshift=-10pt,yshift=-10pt] {$R_\mathrm{i}$};
    \draw[domain=1:3,blue,smooth,variable=\r]
      plot ({\r*cos(31.6*(\r-2))},{\r*sin(31.6*(\r-2))});

    \node [circle,fill,inner sep=1pt] (point) at (2,0) {};
    \node [circle,fill,inner sep=1pt] (CoC) at ($(point) + (-.914,.818)$) {};
    \draw [densely dashed] (point) arc [
        start angle = -41.84,
        end angle = -41.84+45,
        radius = 1.227cm
    ] (point) arc [
        start angle = -41.84,
        end angle = -41.84-45,
        radius = 1.227cm
    ];
    \draw[Latex-Latex] (CoC) -- (point) node [midway,above right] {$\frac{1}{\vert\kappa\vert}$};
    \draw[-Latex,red] (point) -- ++($0.5*(.914,-.818)$) node [near end,left,xshift=0pt,yshift=-10pt,black] {$\vec{N}$};
    \draw[-Latex,green!80!black] (point) -- ++($0.5*(.818,.914)$) node [near end,right,xshift=0pt,yshift=-5pt,black] {$\vec{T}$};
\end{tikzpicture}
    \end{minipage}
    \hspace{80pt}
    \begin{minipage}{0.38\textwidth}
        %for use in presentation and writeup. 
\begin{tikzpicture}[
    font=\Large,
    frame/.pic={
        \draw[-Latex, red](0,0)--(0,-.7);
        \draw[-Latex, green!80!black](0,0)--(.7,0);
        \node[text=black] at (0.05,-1){$\vec{N}$};
        \node[text=black] at (1,-0.1){$\vec{T}$};
    },
    frame/.default={}
    ]
    \node (inner circle) [circle,draw,minimum width=2cm] at (0,0) {};
    \node (outer circle) [circle,draw,minimum width=6cm] at (inner circle) {};
    \draw[fill=black] (0,0) circle (1.5pt);

    % \draw[Latex-Latex] (0,0) -- (inner circle.220) node[midway,above,xshift=-8pt,yshift=-1pt] {$R_\mathrm{i}$};
    % \draw[Latex-Latex] (inner circle.220) -- (outer circle.220) node[midway,above,xshift=-20pt,yshift=-10pt] {$R_\mathrm{o}$};

    \draw[Latex-Latex] (0,0) -- (inner circle.220) node[midway,above,xshift=-8pt,yshift=-1pt] {$R_\mathrm{i}$};
    \draw[Latex-Latex] (0,0) -- (outer circle.120) node[midway,above,xshift=-20pt,yshift=5pt] {$R_\mathrm{o}$};

    \useasboundingbox;

    \draw[Latex-] (3.15,-.56) arc[start angle=-10,end angle=10,radius=3.2cm] node[right] at (3.2,0) {$\omega$};
    \draw[blue, thick] (1.000,0.000)--(1.000,0.000)--(1.000,0.000)--(1.000,0.000)--(1.000,0.000)--(1.000,0.000)--(1.000,0.000)--(1.000,0.000)--(1.001,0.000)--(1.001,0.000)--(1.002,0.000)--(1.002,0.000)--(1.003,0.000)--(1.006,0.000)--(1.009,0.000)--(1.012,0.000)--(1.015,0.000)--(1.030,0.000)--(1.045,0.001)--(1.060,0.002)--(1.075,0.003)--(1.098,0.005)--(1.120,0.007)--(1.142,0.010)--(1.164,0.014)--(1.196,0.020)--(1.227,0.028)--(1.259,0.037)--(1.290,0.047)--(1.334,0.063)--(1.378,0.082)--(1.422,0.104)--(1.465,0.129)--(1.512,0.160)--(1.559,0.194)--(1.604,0.232)--(1.649,0.273)--(1.689,0.314) to pic[sloped]{frame}(1.729,0.359)--(1.767,0.406)--(1.803,0.456)--(1.835,0.504)--(1.865,0.553)--(1.894,0.605)--(1.921,0.658)--(1.946,0.713)--(1.969,0.769)--(1.991,0.827)--(2.011,0.888)--(2.030,0.956)--(2.046,1.027)--(2.058,1.102)--(2.067,1.181)--(2.069,1.209)--(2.070,1.238)--(2.071,1.268)--(2.070,1.298)--(2.069,1.329)--(2.067,1.361)--(2.064,1.394)--(2.060,1.428)--(2.055,1.457)--(2.050,1.488)--(2.044,1.518)--(2.037,1.550)--(2.031,1.577)--(2.024,1.605)--(2.016,1.632)--(2.009,1.660)--(2.004,1.677)--(1.999,1.694)--(1.995,1.710)--(1.990,1.726)--(1.986,1.742)--(1.982,1.758)--(1.978,1.773)--(1.974,1.788)--(1.971,1.801)--(1.968,1.814)--(1.966,1.827)--(1.963,1.840)--(1.961,1.850)--(1.960,1.861)--(1.958,1.871)--(1.957,1.881)--(1.956,1.893)--(1.954,1.904)--(1.954,1.914)--(1.953,1.925)--(1.953,1.935)--(1.953,1.946)--(1.953,1.956)--(1.953,1.965)--(1.954,1.977)--(1.954,1.989)--(1.956,2.000)--(1.957,2.010)--(1.960,2.024)--(1.963,2.038)--(1.966,2.051)--(1.970,2.063)--(1.977,2.083)--(1.985,2.101)--(1.994,2.118)--(2.004,2.134)--(2.014,2.149)--(2.026,2.163)--(2.038,2.177)--(2.051,2.189);
\end{tikzpicture}
    \end{minipage}
    }
    \caption{Setup for geometric curve evolution according to \eqref{e:geomlaw}: shown are unit tangent with a parameterization pointing away from the core anchoring point, the rightward oriented normal, the curvature via inscribed circle of radius $1/|\kappa|$ (left). Note that 
    $\kappa<0$ in this picture, so that the normal speed is less than $V$ when $D_2>0$, effectively shortening and smoothing the curve. Inner and outer boundaries (right) need to be equipped with boundary conditions such as contact angles. Note that $\kappa>0$ near the outer boundary leading to acceleration in this regime when $D_2>0$. Curves converge toward Archimedean spirals rotating rigidly with angular frequency $\omega$. }
    \label{fig:line-tension}
\end{figure*}
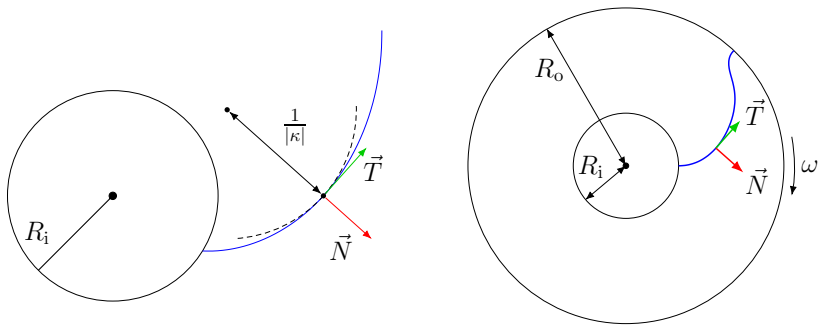

We then specifically focus on  the regime where
\begin{align*}
    &\text{(i)}\ R_\mathrm{i}\gg 1,\  V,D_2,D_4=\rmO(1);\\
    &\text{(ii) } D_4>0\text{ and }  \rmO(1), D_2<0 \text{ and small.} 
\end{align*}
The first condition (i) simply represents a large core, compared to the other quantities $V,D_2,D_4$ in the geometric evolution. Some simple scaling shows that this condition is equivalent to $R_\mathrm{i}=\rmO(1)$, $D_2=\rmO(\eps)$, $D_4=\rmO(\eps^3)$. From this perspective,  in the limiting case $D_2=D_4=0$,  an explicit spiral-wave solution is known and our analysis can be thought of as a singular perturbation of this limiting solution by curvature terms $D_2$ and $D_4$. The second restriction is motivated by the effect of $D_2$ on a planar interface: small perturbations of planar interfaces with $D_2>0$ decay in a diffuse fashion, corresponding to the interpretation of $D_2\kappa$ as a curve-shortening or line-tension effect. Negative line tension, $D_2<0$ then naturally leads to a backward diffusion equation which induces instabilities at all length scales, dampened only at very small scales by the regularizing term $D_4$; see again Figure~\ref{fig:line-tension}.
The anchoring condition naturally leads to a curling-up of the curve, which as a result converges to a rigidly rotating,  (almost) Archimedean spiral so that the long-time dynamics are well described by the existence, stability, and instability of such rotating solutions; see Figure~\ref{fig:filament-winding}.
% fig:filament-winding
\begin{figure*}[!b]
    \centering
     \begin{tikzpicture}[font=\scriptsize]
    \begin{scope}[xshift=-.33\linewidth]
        \draw (0,0) circle (1cm);
        \draw[red] (.866,-.5) -- +(3pt,0) arc [
            start angle = 0,
            end angle = 60,
            radius = 3pt
        ];
        \node[above right,xshift=2pt] at (.866,-.5) {$\vartheta$};
        \draw[blue] (.866,-.5) -- (3,-.5);

        \draw[-Latex] (2,-.5) -- ++(0,-.7) node [midway,right] {$V$};
        \draw[Latex-Latex] (0,0) circle (1pt) [fill] -- (.707,.707) node [midway,above left,yshift=-2pt] {$R_\mathrm{i}$};
    \end{scope}

    \begin{scope}
        \draw (0,0) circle (1cm);
        \draw[red] (0,-1) -- +(1.5pt,-2.6pt) arc [
            start angle = -60,
            end angle = 0,
            radius = 3pt
        ];
        \node[right,xshift=4pt,yshift=-4pt] at (0,-1) {$\vartheta$};
        \draw[blue] (0,-1) to[in=180,out=300] (1,-1.5) -- (3,-1.5);

        \draw[-Latex] (2,-1.5) -- ++(0,-.7) node [midway,right] {$V$};
        \draw[Latex-Latex] (0,0) circle (1pt) [fill] -- (.707,.707) node [midway,above left,yshift=-2pt] {$R_\mathrm{i}$};
    \end{scope}
    
    \begin{scope}[xshift=.33\linewidth]
        \draw (0,0) circle (1cm);
        \draw[red] (-.866,-.5) -- +(-1.5pt,-2.6pt) arc [
            start angle = 240,
            end angle = 300,
            radius = 3pt
        ];
        \node[below,xshift=2pt,yshift=-4pt] at (-.866,-.5) {$\vartheta$};
        \draw[blue] (-.866,-.5) to[in=180,out=240] (1,-3) -- (3,-3);

        \draw[-Latex] (2,-3) -- ++(0,-.7) node [midway,right] {$V$};
        \draw[Latex-Latex] (0,0) circle (1pt) [fill] -- (.707,.707) node [midway,above left,yshift=-2pt] {$R_\mathrm{i}$};
    \end{scope}
    \end{tikzpicture}    \\
    \includegraphics[width=.8\linewidth]{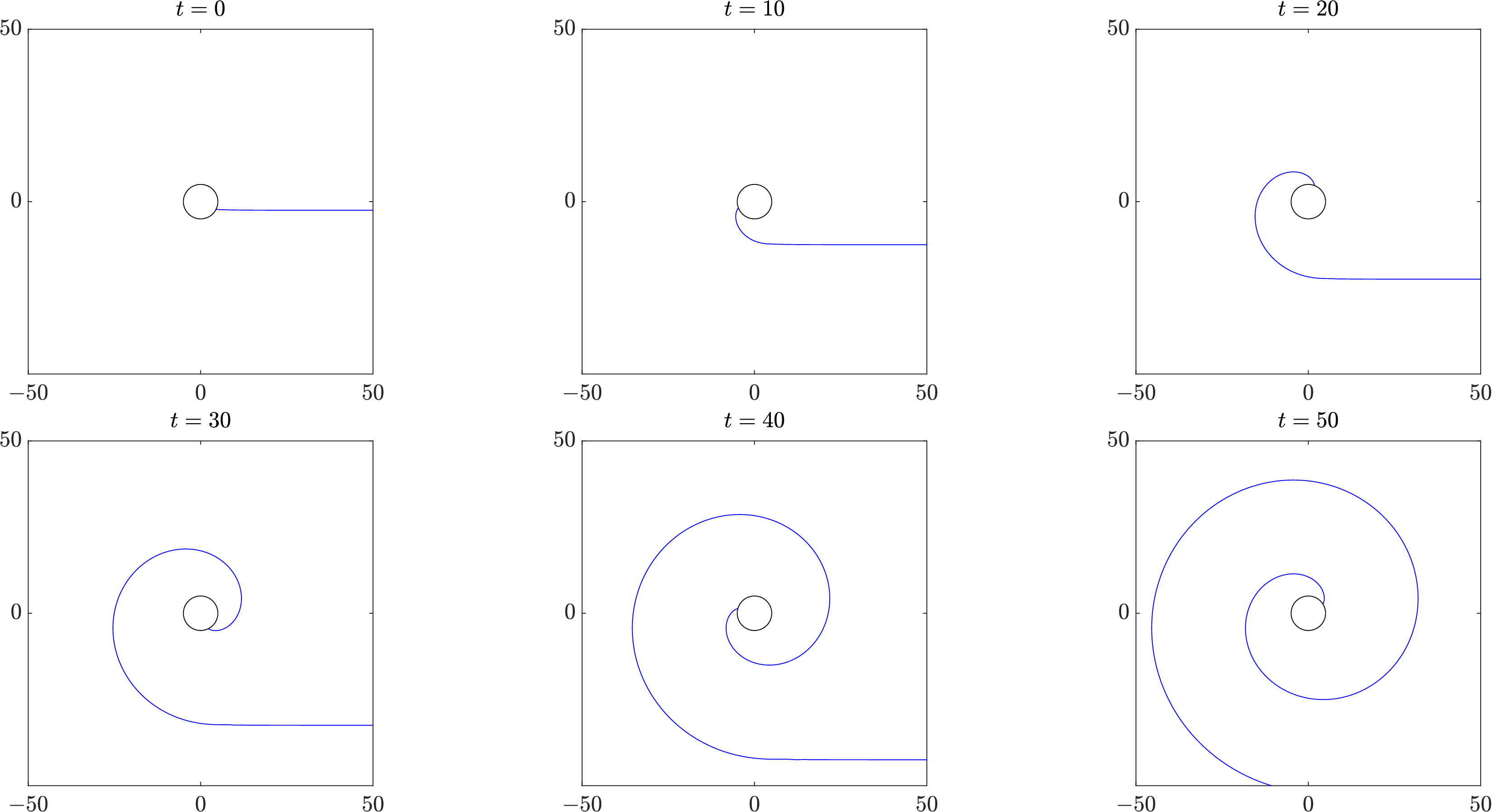}

    \caption{Schematic of the early stages of the evolution of a straight anchored curve into an Archimedean spiral; boundary condition here is a contact angle $\vartheta=\pi/3$ (top and bottom). This schematic is corroborated in direct numerical simulations with $V=D_2=1,D_4=0$ (bottom).
    % The final profile well resembles a best fit Archimedean spiral (red dashed line).
    }
    \label{fig:filament-winding}
\end{figure*}

Our main results can be stated informally as follows.

\textbf{Result 1.} Fix $V>0,D_4>0$, and $D_2$. Then for  ``compatible boundary conditions'' at $r=R_\mathrm{i}$ and all $R_\mathrm{i}\gg 1$ sufficiently large, there exists a rigidly rotating spiral wave solution with frequency
\[
\omega=\frac{V}{\sin(\vartheta_\mathrm{i})}R_\mathrm{i}^{-1}+\rmO(R_\mathrm{i}^{-2}),
\]
where $\vartheta_\mathrm{i}$ is the contact angle between the curve and the inner circle as shown in Figure~\ref{fig:filament-winding}.
We refer to Section~\ref{s:3} for a precise characterization of compatible boundary conditions but note here that this is an open class of two conditions on angle, curvature, and arclength derivative of curvature.

\textbf{Result 2.} Fix $V,D_4>0$, and  $R_{\mathrm{i}}\gg1$, and ``compatible boundary conditions'' at $r=R_\mathrm{i}$ as in the first result. Then for $D_2>0$, the linearization at the spiral wave does not possess unstable eigenvalues. On the other hand, when $D_2$ decreases past $D_2^\mathrm{crit}(R_\mathrm{i},D_4,V)<0$, the spiral wave undergoes a Hopf bifurcation with an eigenfunction that grows super-exponentially as $r\to\infty$.

Result 1 is rigorously established using geometric singular perturbation theory. Without rigorous prove, we derive predictions as in Result 2 using similar slow-fast analysis. For both results, we also interpret implications for large bounded domains, when $1\ll R_\mathrm{o}<\infty$.

\textbf{Outline.} The remainder of this paper is organized as follows. We derive explicit nonlinear, degenerate parabolic evolution equations for curves $\gamma$ given as graphs in polar coordinates $\varphi=\Phi(t,r)$ and ODE boundary-value problems for rigidly rotating spirals that solve $\partial_t\Phi=\omega$ in Section~\ref{s:2}. Section~\ref{s:3} states and proves our main results on existence. We introduce rescaled radial variables to transform the fourth-order differential equation for spiral waves into a first-order ODE that is singularly perturbed in the sense of Fenichel's geometric singular perturbation theory and construct solutions as described in Result 1. In the following, Section~\ref{s:4}, we mimic this construction for the eigenvalue problem. We present numerical results both from direct simulation and from numerical continuation in Section~\ref{s:5} and conclude with a discussion in Section~\ref{s:dis}.

\textbf{Acknowledgment.} The authors gratefully acknowledge support through grant NSF-DMS-2205663. Additionally, this material is based upon work supported by the National Science Foundation Graduate Research Fellowship Program under Grant No. 2237827. Any opinions, findings, and conclusions or recommendations expressed in this material are those of the authors and do not necessarily reflect the views of the National Science Foundation.

%%%%%%%%%%%%%%%%%%%%%%%%%%%%%%%%%%%%%%%%%%%%%%%%%%%%%%
\section{Curve evolution in polar coordinates}\label{s:2}
We derive the equations for the evolution of a curve given in polar coordinates $(r,\varphi)$ through $\varphi=\Phi(t,r)$. The calculations here extend the calculations in \cite{li2024anchoredspiralsdrivencurvature} where we considered the case $D_4=0$. For reference, we refer to Figure~\ref{fig:line-tension}.

In the Cartesian plane, the time-dependent parameterized curve is in the form 
\[
\Gamma(t)=\{ \gamma(t,r)|r\geq R_\mathrm{i}\}\subset \R^2,
\]
\[
\gamma(t,r)= \left(\begin{array}{c}r\cos(\Phi(r,t))\\r\sin(\Phi(r,t))\end{array}\right),
\]
with unit tangent and right normal
\[
{T}=\frac{1}{M}\begin{pNiceMatrix}
    \cos{\Phi} - r\Phi_r\sin{\Phi} \\
    \sin{\Phi} - r\Phi_r\cos{\Phi}
\end{pNiceMatrix}\, ,
\]
\[
{N}=\frac{1}{M}\begin{pNiceMatrix}[r]
     \sin{\Phi} - r\Phi_r\cos{\Phi} \\
    -\cos{\Phi} + r\Phi_r\sin{\Phi}
\end{pNiceMatrix}\, ,
\]
\[
M := \frac{\partial s}{\partial r} = \sqrt{1+r^2\Phi_r^2}\, ,
\]
where $M$ is a metric factor induced by the radial (instead of arclength) parameterization.
In order to derive the evolution equation, we compute normal velocity, curvature, and arclength derivatives of curvature, 
\begin{align}
c = &\left\langle {\gamma}_t,{N}\right\rangle=-\frac{r\Phi_t}{M}\,,\nonumber\\
\kappa=&\left\langle\frac{\rmd{T}}{\rmd s},{N}\right\rangle=-\frac{r\Phi_{rr}+r^2\Phi_r^3+2\Phi_r}{M^3}\,,\nonumber\\
\kappa_{ss} =& \frac{\partial r}{\partial s}\frac{\partial}{\partial r}\left[\frac{\partial r}{\partial s}\frac{\partial \kappa}{\partial r}\right]=\frac{1}{M}\frac{\partial}{\partial r}\left[\frac{1}{M}\frac{\partial\kappa}{\partial r}\right]\nonumber\\
=&-\frac{1}{M^9}\bigg(r^5 \Phi_{rrrr} \Phi_{r}^4 + r^5 \Phi_{r}^6 \Phi_{rr} + 15 r^5 \Phi_{r}^2 \Phi_{rr}^3 - 10 r^5 \Phi_{rrr} \Phi_{r}^3 \Phi_{rr}\nonumber\\
&\hspace{4em}+ 3 r^4 \Phi_{r}^7 - 6 r^4 \Phi_{rrr} \Phi_{r}^4 + 21 r^4 \Phi_{r}^3 \Phi_{rr}^2 - 3 r^3 \Phi_{rr}^3 + 2 r^3 \Phi_{rrrr} \Phi_{r}^2\nonumber\\
&\hspace{4em}+ 19 r^3 \Phi_{r}^4 \Phi_{rr} - 10 r^3 \Phi_{rrr} \Phi_{r} \Phi_{rr} + 17 r^2 \Phi_{r}^5- 2 r^2 \Phi_{rrr} \Phi_{r}^2 \nonumber\\
&\hspace{4em}- 33 r^2 \Phi_{r} \Phi_{rr}^2 + r \Phi_{rrrr} + 4 \Phi_{rrr} - 4 \Phi_{r}^3 - 36 r \Phi_{r}^2 \Phi_{rr}\bigg)\,.\label{e:curv}
\end{align}
Substituting these expressions into the governing \eqref{e:geomlaw}, 
we find a fourth-order nonlinear parabolic equation for $\Phi$,
\begin{equation}\label{eq:2.6}
\begin{split}
    \Phi_t &= -\Phi_{rrrr} \frac{D_4}{M^4}  +\Phi_{rrr}\frac{D_4}{M^8}  \bigg(6  r^3 \Phi_r^4+\Phi_{rr} \big(10  r^4 \Phi_r^3+10  r^2 \Phi_r\big)+2 r \Phi_r^2-\frac{4 }{ r}\bigg)\\
    &\hspace{2em} +\Phi_{rr}^3 \frac{D_4}{M^8} \left(-15 r^4 \Phi_r^2+3  r^2\right) +\Phi_{rr}^2 \frac{D_4}{M^8}  \left(-21  r^3 \Phi_r^3+33 r \Phi_r\right)\\
    &\hspace{2em} +\Phi_{rr} \frac{D_4}{M^8}\left(- r^4 \Phi_r^6-19  r^2 \Phi_r^4+36  \Phi_r^2\right) +\Phi_{rr}\frac{D_2}{M^2} - \Phi_r^7 \frac{3 D_4 r^3}{M^8}-\Phi_r^5\frac{17 D_4 r }{M^8}\\
    &\hspace{2em}+\Phi_r^3\frac{\left( D_2 M^6 r^2+4 D_4\right) }{M^8 r}+\Phi_r\frac{2 D_2 }{M^2 r}-\frac{M V}{r}\,.
\end{split}
\end{equation}
For well-posedness on a finite annulus $R_\mathrm{i}<r<R_\mathrm{o}$, we supplement \eqref{eq:2.6} with boundary conditions specifying  fixed contact angles $\vartheta_\mathrm{i/o}$ at $R_\mathrm{i/o}$ and vanishing curvature
\begin{equation}\label{eqn:neumann-bcs}
\begin{aligned}
R_\mathrm{i}\Phi_r(R_\mathrm{i},t) &= \cot{\vartheta_\mathrm{i}}\,, & R_\mathrm{o}\Phi_r(R_\mathrm{o},t) &= \cot{\vartheta_\mathrm{o}}\,,\\
\kappa(R_\mathrm{i},t) &= 0\,,  & \kappa(R_\mathrm{o},t) &= 0\,,
\end{aligned}
\end{equation}
where $\kappa$ is expressed in terms of $R_\mathrm{i/o}$ and derivatives of $\Phi$ as in \eqref{e:curv}. We refer to the end of Section~\ref{s:3} for a discussion of the more general class of boundary conditions allowed in our analysis. 

We are particularly interested in rigidly rotating spiral waves, which are of the form $\Phi(r,t) = \phi(r) - \omega t$ for some constant rotation frequency $\omega > 0$.  The equation for $\phi$ then becomes a fourth-order non-autonomous ordinary differential equation, involving only derivatives of $\phi$, which yields, with variables $\ell_j=\frac{\rmd^j}{\rmd r^j}\phi$, and writing $\left(' = \frac{\rmd}{\rmd r}\right)$,
\begin{align}
    \phi'& = \ell_1\,,\nonumber\\
    \ell_1' &= \ell_2\,,\nonumber\\
    \ell_2' &= \ell_3\,,\nonumber\\
    \ell_3' &= F(r,\ell_1,\ell_2,\ell_3)\,,\label{eq:2.8}
\end{align}
where
\begin{equation}
\begin{split}
F(r,\ell_1,\ell_2,\ell_3)
&= \frac{1}{D_4r}\bigg(M^4r\omega - VM^5 + M^2D_2 \left(r\ell_2 + r^2\ell_1^3+2\ell_1\right)\bigg)\\
&\hspace{1em}-\frac{1}{rM^4}\bigg(r^5\ell_1^6\ell_2+15r^5\ell_1^2\ell_2^3-10r^5\ell_3\ell_1^3\ell_2+3r^4\ell_1^7\\
&\hspace{5em}-6r^4\ell_3\ell_1^4+21r^4\ell_1^3\ell_2^2-3r^3\ell_2^3+19r^3\ell_1^4\ell_2\\
&\hspace{5em}-10r^3\ell_3\ell_1\ell_2+17r^2\ell_1^5-2r^2\ell_3\ell_1^2\\
&\hspace{5em}-33r^2\ell_1\ell_2^2+4\ell_3-4\ell_1^3-36r\ell_1^2\ell_2\bigg)\,,
\end{split}
\end{equation}
and $M(r,\ell_1) = \sqrt{1+r^2\ell_1^2}$.

Note that $\Phi(t,r)$ and $\phi(r)$ do not appear in the right-hand sides of \eqref{eq:2.6} and \eqref{eq:2.8}, a simple consequence of rotational invariance of the anchored curve problem. We may thus consider \eqref{eq:2.8} as a non-autonomous equation for $(\ell_1,\ell_2,\ell_3)$, only. Boundary conditions translate into
\begin{equation}
\begin{aligned}
R_\mathrm{i}\ell_1(R_\mathrm{i}) &= \cot(\vartheta_\mathrm{i})\,, & R_\mathrm{o}\ell_1(R_\mathrm{o}) &= \cot(\vartheta_\mathrm{o})\,, \\
\kappa(R_\mathrm{i}) &= 0\,, & \kappa(R_\mathrm{o}) &= 0 \,,
\end{aligned}
\end{equation}
with
\[
\kappa(r)=-\frac{r\ell_2(r)+r^2\ell_1(r)^3+2\ell_1(r)}{\left(1+r^2\ell_1(r)^2\right)^{3/2}}.
\]
We conclude with the simple eikonal  limiting case $D_2=D_4=0$, when \eqref{eq:2.6} for rigidly rotating spirals simply becomes $\omega=MV/r$, which translates into 
\begin{equation}
    \frac{1}{r^2}+\ell_1^2=\frac{\omega^2}{V^2}. 
\end{equation}
Integrating $\phi'(r)=\ell_1(r)$, we find the \emph{eikonal solutions}
\begin{equation}
\phi(r)=\pm\left[\sqrt{\frac{r^2\omega^2}{V^2}-1}-\arctan\sqrt{{\frac{r^2\omega^2}{V^2}-1}}\right]\,,
\end{equation}
unique up to a translation in $\phi$. The boundary condition then selects the frequency. For instance, setting $\phi'(R_\mathrm{i})=0$ gives $\omega=V/R$ and 
\begin{equation}\label{e:expl}
    \phi(r)=\pm\left[\sqrt{\left(\frac{r}{R_\mathrm{i}}\right)^2-1}-\arctan\sqrt{\left(\frac{r}{R_\mathrm{i}}\right)^2-1}\right]\,;
\end{equation}
compare Figure~\ref{fig:omega-heuristics} for numerical evidence.
Expanding at $r=\infty$ gives 
\begin{equation}
    \phi(r)=\pm\left(\frac{1}{R_\mathrm{i}}r -\frac{\pi}{2}+\frac{R_\mathrm{i}}{2}r^{-1}+\rmO\left(r^{-3}\right)\right),
\end{equation}
and the leading-order term demonstrates that the solution is indeed an asymptotically Archimedean spiral, with distance $2\pi R_\mathrm{i}$ between consecutive arms, that is, a spatial wavenumber $k=1/R_\mathrm{i}=\omega/V$. Inspecting the solutions, only the increasing solution in \eqref{e:expl}, corresponding to the positive sign, yields an outward rotating Archimedean spiral of interest here.  
% fig:omega-heuristics
\begin{figure*}
    \centering
    \includegraphics[width=.85\linewidth]{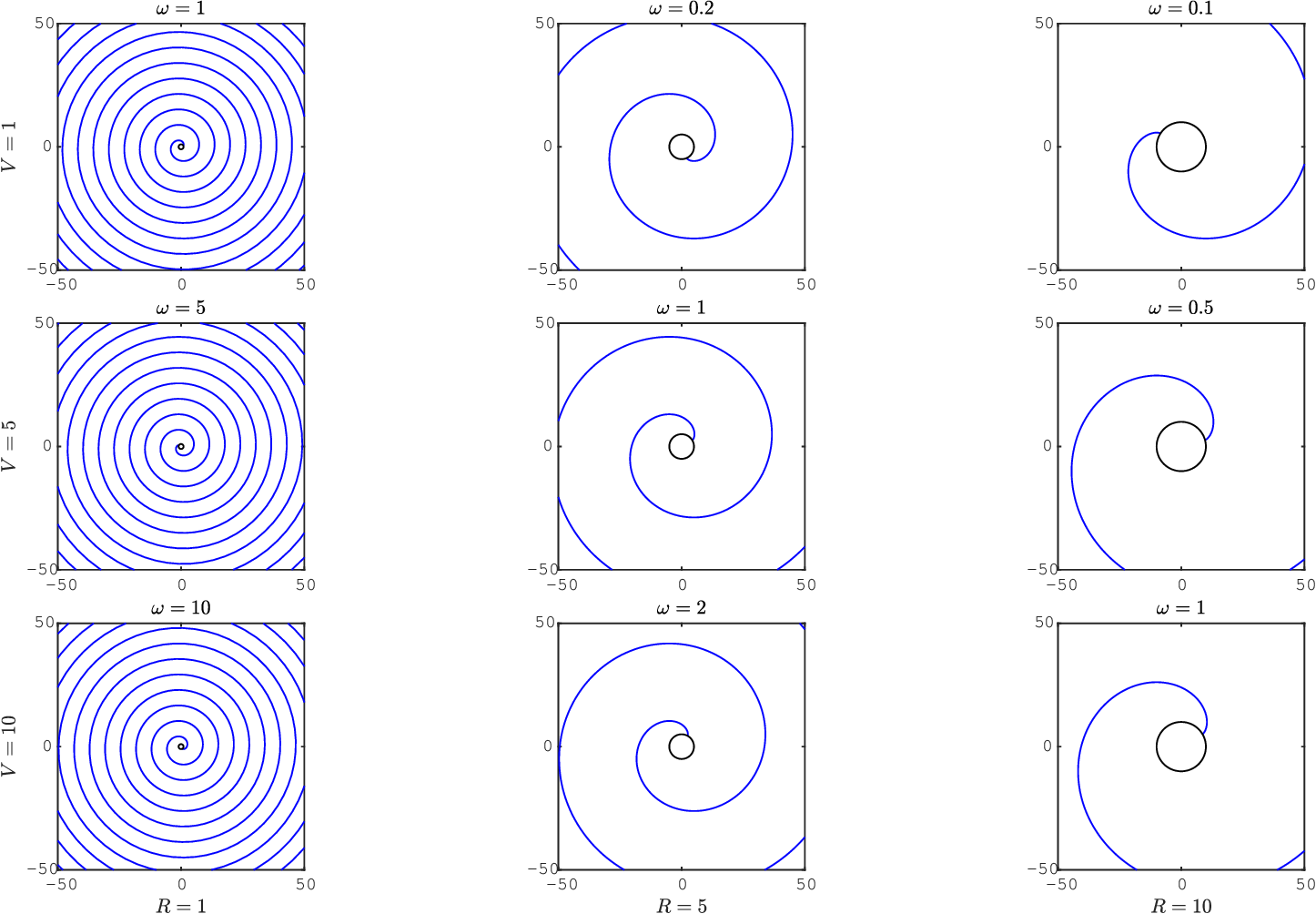}
    \caption{Results of direct simulations of \eqref{eq:2.6} and \eqref{eqn:neumann-bcs}, with $D_2=D_4=0$ and varying $V=1,5,10$ (top to bottom) and $R_\mathrm{i}=1,5,10$ (left to right). The measured frequencies confirm the relation $\omega=V/R$.}
    \label{fig:omega-heuristics}
\end{figure*}

We also note in passing that the asymptotics lack the typical logarithmic correction present in expansions that include $D_2$ \cite[Prop. 3.2]{li2024anchoredspiralsdrivencurvature} or for spiral waves in reaction-diffusion systems \cite[(3.7)]{sandstede2021spiral}.

%%%%%%%%%%%%%%%%%%%%%%%%%%%%%%%%%%%%%%%%%%%%%%%%%%%%%%

\section{The eikonal limit as a geometric singular perturbation theorem}\label{s:3}

In this section, we turn to the analysis of the existence problem for rigidly rotating spirals \eqref{eq:2.8} in the regime where $R_\mathrm{i}=1/\eps\gg 1$. 
% Clearly, this is a singular perturbation  problem since $D_2$ and $D_4$ introduce second and fourth order derivatives into a first-order differential equation. 
Our goal here is to cast this perturbation problem in the framework of Fenichel's geometric singular perturbation \cite{fen} theory, that is, to find variables and time scalings so that in a singular limit $\eps=0$, the system of equations possesses a manifold of equilibria.  

% We start assuming smallness $D_2=\eps\tilde{D}_2$, $D_4=\eps^3\tilde{D}_4$. We then scale $t=\eps\tilde{t}$, and $r=\eps\tilde{r}$ in \eqref{eq:2.6} to see $\eps$ disappear at the expense of a large core $\tilde{R}_\mathrm{i}=R_\mathrm{i}/\eps$. We now drop tildes and proceed in several steps. 

In order to include the limit $r\to\infty$ in our analysis, we introduce the compactified inverse radius $1/r=:\alpha$ as an independent variable, following \cite{li2024anchoredspiralsdrivencurvature,ssradial}, writing
\begin{equation}\label{eq:3.1}
    \begin{cases}
        u = \ell_1\,,\\
        u_1 = \ell_2\,,\\
        u_2 = \ell_3\,,\\
        \alpha = 1/r\,,\\
        \tilde M = \alpha M = \sqrt{\alpha^2 + u^2}\,,
    \end{cases}
    \Longrightarrow
    \begin{cases}
        \frac{\rmd u}{\rmd r} = u_1\,,\\
        \frac{\rmd u_1}{\rmd r} = u_2\,,\\
        \frac{\rmd u_2}{\rmd r} = \tilde F(\alpha,u,u_1,u_2)\,,\\
        \frac{\rmd\alpha}{\rmd r} = -\alpha^2\,,
    \end{cases}
\end{equation}
where
\begin{align*}
\tilde F(\alpha,u,u_1,u_2)
&= \frac{\tilde M^4}{ \alpha^4D_4}\bigg(\omega - V\tilde M + \tilde M^{-2}D_2 \left(u_1\alpha^2 + u^3\alpha+2u\alpha^3 \right)\\
&\hspace{5em}-\alpha^4 D_4\tilde M^{-8}\Big(u^6u_1+15u^2u_1^3-10u_2u^3u_1\\
&\hspace{5em}+3\alpha u^7-6\alpha u_2u^4+21\alpha u^3u_1^2-3\alpha^2u_1^3\\
&\hspace{5em}+19\alpha^2u^4u_1-10\alpha^2u_2uu_1+17\alpha^3u^5\\
&\hspace{5em}-2\alpha^3u_2u^2-33\alpha^3uu_1^2+4u_2\alpha^5\\
&\hspace{5em}-4u^3\alpha^5-36\alpha^4u^2u_1\Big)\bigg)\,.
\end{align*}
Note that $\tilde{F}$ is singular at $\alpha=0$ due to the factor $\alpha^{-4}$. We remedy this by a rescaling of time, and associated weighting of derivatives,
\begin{equation}
    \begin{cases}
        v := u\,,\\
        v_1 := \alpha^{4/3} u_1\,,\\
        v_2 := \alpha^{8/3} u_2\,,\\
        \tau := \frac{3}{7}r^{7/3} \,,
    \end{cases}
    \Longrightarrow
    \begin{cases}
        \frac{\rmd v}{\rmd \tau}   =v_1\,, \\
        \frac{\rmd v_1}{\rmd \tau} = v_2 - \frac{4}{3}\alpha^{7/3} v_1\,, \\
        \frac{\rmd v_2}{\rmd \tau} =  \tilde G(v,v_1,v_2,\alpha) - \frac{8}{3}\alpha^{7/3} v_2\,,\\
        \frac{\rmd \alpha}{\rmd \tau} = -\alpha^{10/3}\,,
    \end{cases}
\end{equation}
where
\begin{align*}
\tilde G(\alpha,v,v_1,v_2)
&= \frac{\tilde M^4}{D_4}\bigg(\omega - V\tilde M + \tilde M^{-2}D_2 \left(v_1\alpha^{2/3} + v^3\alpha+2v\alpha^3 \right)\\
&\hspace{5em}- D_4\tilde M^{-8}\Big(15v^2v_1^3-10v^3v_1v_2-3\alpha^{2}v_1^3\\
&\hspace{5em}-10\alpha^{2}vv_1v_2-6\alpha^{7/3}v^4v_2+21\alpha^{7/3}v^3v_1^2\\
&\hspace{5em}+\alpha^{8/3}v_1v^6-2\alpha^{13/3}v^2v_2-33\alpha^{13/3}vv_1^2\\
&\hspace{5em}+19\alpha^{14/3}v^4v_1+17\alpha^{5}v^5-4\alpha^{5}v^3\\
&\hspace{5em}+3\alpha^{5}v^7+4\alpha^{19/3}v_2-36\alpha^{20/3}v^2v_1\Big)\bigg)\,.
\end{align*}

%  Rescaling 2:
% \begin{equation}
%     \begin{cases}
%         v := u\,,\\
%         v_1 := \alpha^{4/3} u_1\,,\\
%         v_2 := \alpha^{8/3} u_2\,,
%     \end{cases}
%     \quad \Rightarrow \quad
%     \begin{cases}
%         \frac{\rmd v}{\rmd r}   = \alpha^{-4/3}v_1\,, \\
%         \frac{\rmd v_1}{\rmd r} = \alpha^{-4/3}v_2 - \frac{4}{3}\alpha v_1\,, \\
%         \frac{\rmd v_2}{\rmd r} = \alpha^{-4/3} \tilde G(v,v_1,v_2,\alpha) - \frac{8}{3}\alpha v_2\,,\\
%         \frac{\rmd \alpha}{\rmd r} = -\alpha^2\,,
%     \end{cases}
% \end{equation}
% where
% \begin{align*}
% \tilde G(\alpha,v,v_1,v_2) &= \frac{\tilde M^4}{D_4}\bigg(\omega - V\tilde M + \tilde M^{-2}D_2 \left(v_1\alpha^{2/3} + v^3\alpha+2v\alpha^3 \right)\\
% &\hspace{3em}- D_4\tilde M^{-8}\Big(15v^2v_1^3-10v^3v_1v_2-3\alpha^{2}v_1^3-10\alpha^{2}vv_1v_2-6\alpha^{7/3}v^4v_2\\
% &\hspace{3em}+21\alpha^{7/3}v^3v_1^2+\alpha^{8/3}v_1v^6-2\alpha^{13/3}v^2v_2-33\alpha^{13/3}vv_1^2+19\alpha^{14/3}v^4v_1\\
% &\hspace{3em}+17\alpha^{5}v^5-4\alpha^{5}v^3+3\alpha^{5}v^7+4\alpha^{19/3}v_2-36\alpha^{20/3}v^2v_1\Big)\bigg)\,.
% \end{align*}

% Rescaling 3:
% \begin{equation}
%     \tau := \frac{3}{7}r^{7/3} \quad \Rightarrow \quad dr = \alpha^{4/3}\,d\tau \quad \Rightarrow \quad
%     \begin{cases}
%         \frac{\rmd v}{\rmd \tau} = v_1\,,\\
%         \frac{\rmd v_1}{\rmd \tau} = v_2 - \frac{4}{3} \alpha^{7/3} v_1\,,\\
%         \frac{\rmd v_2}{\rmd \tau} = \tilde G(v,v_1,v_2,\alpha) - \frac{8}{3}\alpha^{7/3} v_2\,,\\
%         \frac{\rmd \alpha}{\rmd \tau} = -\alpha^{10/3}\,.\\
%     \end{cases}
% \end{equation}
We remark at this point that the somewhat odd exponents indicate relevant time scales: we will find behavior in $v$ that is exponential in $\tau$, indicating super-exponential growth and decay, for instance in boundary layers, in the original equation with rates $\rme^{\eta r^{7/3}}$ for some $\eta$. 

Our last step scales the size of the core region to $\rmO(1)$, through $\hat{r}=\eps r$,  $\hat\alpha = \varepsilon^{-1}\,\alpha$. The contact angle $\vartheta_\mathrm{i}$ enforces $r v=r\Phi_r\sim\cot\vartheta_\mathrm{i}=\rmO(1)$, so that 
$w=\eps^{-1}v$ is of order 1 near the boundary. Together, we scale and append a trivial equation for the rescaled frequency parameter $\Omega$, 
\begin{equation}\label{eq:3.3}
    \begin{cases}
        w=\varepsilon^{-1}\,v\,,\\
        w_1=\varepsilon^{-7/3}\,v_1\,,\\
        w_2 = \varepsilon^{-11/3} \, v_2\,,\\
        \hat\alpha = \varepsilon^{-1}\,\alpha\,,\\
        T=\varepsilon^{4/3}\tau\,,\\
        \Omega = \omega/\varepsilon\,,\\
        \mathcal{M} = \tilde{M}/\varepsilon\,,
    \end{cases}
    \Longrightarrow 
    \begin{cases}
        \frac{\rmd w}{\rmd T}          = w_1\,, \\
        \frac{\rmd w_1}{\rmd T}        = w_2 - \frac{4}{3}\varepsilon\hat\alpha^{7/3} w_1\,, \\
        \frac{\rmd w_2}{\rmd T}        = \mathcal{G}_{\phantom{2}}-\frac{8}{3}\varepsilon\hat\alpha^{7/3} w_2\,, \\
        \frac{\rmd\hat\alpha}{\rmd T}  = -\varepsilon\hat\alpha^{10/3}\,,\\
        \frac{\rmd\Omega}{\rmd T}=0\,,
    \end{cases}
\end{equation}
where
\begin{align*}
\mathcal G(\hat\alpha,w,w_1,w_2)
&= \frac{\mathcal M^4}{D_4}(\Omega - V\mathcal M) + \frac{\mathcal{M}^2D_2}{D_4} \left(w_1\hat\alpha^{2/3} + \varepsilon w^3\hat\alpha+2\varepsilon w\hat \alpha^3 \right)\\
&\hspace{1em} - \frac{1}{\mathcal M^4}\Big(15w^2w_1^3-10w^3w_1w_2-3\hat\alpha^2w_1^3 - 10\hat\alpha^{2}ww_1w_2\\
&\hspace{1em}-6\varepsilon\hat \alpha^{7/3}w^4w_2+21\varepsilon\hat\alpha^{7/3}w^3w_1^2 - 2\varepsilon\hat\alpha^{13/3}w^2w_2\\
&\hspace{1em} - 33\varepsilon\hat\alpha^{13/3}ww_1^2  + 4\varepsilon\hat\alpha^{19/3}w_2 +  \varepsilon^2\hat\alpha^{8/3}w^6w_1\\
&\hspace{1em} + 19\varepsilon^2\hat\alpha^{14/3}w^4w_1 - 36\varepsilon^2\hat\alpha^{20/3}w^2w_1 + 17\varepsilon^3\hat\alpha^{5}w^5 \\
&\hspace{1em}- 4\varepsilon^3\hat\alpha^{5}w^3+3\varepsilon^3\hat\alpha^5w^7 \Big)\,.
\end{align*}
This is the final form that we were looking for which is regular in $\eps$. Setting $\hat{\alpha}=0$, we find the equilibrium $w_1=w_2=0$, $w=\Omega/V$, for all $\eps$, which corresponds to an asymptotically Archimedean shape. For $\eps=0$, the system reads
\begin{equation}\label{eq:3.4}
\begin{alignedat}{2}
& w'   & &= w_1\,, \\
& w_1' & &= w_2\,, \\
& w_2' & &= \frac{\mathcal{M}^4}{D_4}(\Omega - V\mathcal{M}) + \frac{\mathcal{M}^2D_2}{D_4}w_1\hat\alpha^{2/3} + \frac{10ww_1w_2}{\mathcal{M}^2} - \frac{3w_1^3}{\mathcal{M}^4}(5w^2+\hat\alpha^2)\,, \\
& \hat\alpha' & &= 0\,, \\
&\Omega' & & =0\,.
\end{alignedat}
\end{equation}
The right-hand side vanishes when $w_1=w_2=0$ and either $\mathcal M = 0$ or $\Omega = V\mathcal M$. The case $\mathcal{M}=0$ forces $w=\hat\alpha=0$, corresponding to the origin as an equilibrium of \eqref{eq:3.4}, not a point of interest here. The case $\Omega=V\mathcal M$ gives
\[
\mathcal M^2= w^2+\hat\alpha^2 \stackrel{!}{=} \left(\frac{\Omega}{V}\right)^2 \,,
\]
which yields a manifold of equilibria forming a quarter-circle in the positive quadrant of the $w$-$\hat\alpha$ plane. Note that undoing the scaling introduces a factor  $\varepsilon^2$ and recovers the ODE
\[\left(\frac{\omega}{V}\right)^2 = \phi_r^2+\frac{1}{r^2}\,,\]
whose solution is identical to that of the $D_2=D_4=0$ case briefly explored at the end of Section~\ref{s:2}. In other words, we recover the geometric evolution problem with only normal velocity $V$, the eikonal flow, without line tension $D_2$ and higher-order terms $D_4$, in this particular scaling limit:  The singular perturbation problem of adding higher derivatives is turned into a regular perturbation problem in which the effects of higher derivatives induce a slow flow on a manifold of equilibria. One could of course also scale the planar curve that evolves according to \eqref{e:geomlaw}, magnifying for instance by a factor $1/\eps$, which then induces factors $\eps^{-2}$ and $\eps^{-4}$ in front of $D_2$ and $D_4$, respectively, and after some manipulations arrive at the system here. 

% This suggests that on the slow time scale the source term $V$ dominates over the curvature terms $D_2\kappa$ and $-D_4\kappa_{ss}$. In other words, the far-field limit is an eikonal limit in which the effects of higher derivatives diminish significantly.

% \begin{figure}
% \centering
% \input{Figures/quarter_circle}
% \caption{Diagrammatic representation of the quarter-circle of equilibria in the $w$-$\hat\alpha$ plane.}
% \end{figure}
The boundary conditions in the new scaled variables are two-dimensional manifolds $\mathcal{B}_\mathrm{i/o}$ in the  phase space $(w,w_1,w_2,\hat{\alpha},\Omega)\in\R^5$, given through 
\begin{align}
\mathcal{B}_\mathrm{i}&=\left\{(w,w_1,w_2,\hat{\alpha},\Omega)\,\middle\vert\, 
\begin{aligned}
    &\hat{\alpha}=\hat{\alpha}_\mathrm{i},\ w=\hat{\alpha}_\mathrm{i}\cot(\vartheta_\mathrm{i}),\ \\
    &w_1=-\eps\left(\hat{\alpha}_\mathrm{i}^{1/3}w^3+2\hat{\alpha} _\mathrm{i}^{7/3}w\right)
\end{aligned}\right\}, \label{e:scaled_bdyi}\\
\mathcal{B}_\mathrm{o}&=\left\{(w,w_1,w_2,\hat{\alpha},\Omega)\,\middle\vert\, 
\begin{aligned}
    &\hat{\alpha}=\hat{\alpha}_\mathrm{o},\ w=\hat{\alpha}_\mathrm{o}\cot(\vartheta_\mathrm{o}),\ \\
    &w_1=-\eps\left(\hat{\alpha}_\mathrm{o}^{1/3}w^3+2\hat{\alpha} _\mathrm{o}^{7/3}w\right)
\end{aligned}
\right\}. \label{e:scaled_bdyo}
\end{align}
The differential equation \eqref{eq:3.3} defines a dynamical system $\Psi_T$ on the 5-dimensional phase space. Finding rotating solutions in an annulus then amounts to finding values of $T$ for which $\Psi_T(\mathcal{B}_\mathrm{i})\cap \mathcal{B}_\mathrm{o}\neq \emptyset$. 

For the problem with $R_\mathrm{o}=\infty$, we are interested in solutions that converge to one of the equilibria $\hat{\alpha}=0, w=\Omega/V, w_1=w_2=0$.  As we shall verify below, these equilibria possess a three-dimensional center-stable manifold $W^\mathrm{cs}$. Moreover, the equilibria are asymptotically stable within this (therefore unique) manifold, when restricting to  $\hat{\alpha}\geq 0$, for any $\eps>0$. Spiral waves then correspond to intersections $W^\mathrm{cs}\cap \mathcal{B}_\mathrm{i}\neq \emptyset.$
\begin{theorem}\label{t:ex_infinite}
    Fix $V, D_4 > 0$,  $D_2\in\R$ and consider the system \eqref{eq:3.3} with boundary conditions \eqref{e:scaled_bdyi} and $0<\vartheta_\mathrm{i}<\pi/2$. Then, for all $\eps>0$ sufficiently small, there exists $\Omega_*(\eps)$ and a solution $W(T)=(w,w_1,w_2,\hat{\alpha},\Omega)(T)$ with  
    \[
    W(0)\in\mathcal{B}_\mathrm{i},\quad \lim_{T\to\infty}W(T)=(\Omega_*(\eps)/V,0,0,0,\Omega_*(\eps)).
    \]
    Moreover, we have $\Omega_*(0)=V\hat{\alpha}_\mathrm{i}\csc(\vartheta_\mathrm{i})$
\end{theorem}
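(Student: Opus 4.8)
I would read the statement of Theorem~\ref{t:ex_infinite} as a connection problem for the $\eps=0$ system \eqref{eq:3.4} and apply Fenichel's geometric singular perturbation theory \cite{fen}. The critical manifold is the surface of fast equilibria $\mathcal{C}_0=\{w_1=w_2=0,\ \Omega=V\mathcal{M}\}$, a two-dimensional set parametrized by $(\hat\alpha,\Omega)$ with $w=\sqrt{(\Omega/V)^2-\hat\alpha^2}$. First I would check normal hyperbolicity by linearizing the fast subsystem $(w,w_1,w_2)$ of \eqref{eq:3.4} along $\mathcal{C}_0$: the Jacobian is in companion form and has characteristic polynomial $\lambda^3-\frac{D_2\mathcal{M}^2\hat\alpha^{2/3}}{D_4}\,\lambda+\frac{V\mathcal{M}^3 w}{D_4}$. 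Since $V,D_4>0$, the constant term $q:=V\mathcal{M}^3 w/D_4$ is strictly positive whenever $w>0$; because the sum of the three roots vanishes and their product equals $-q<0$, the cubic has exactly one negative real root and one complex pair with positive real part, hence no root on the imaginary axis. Thus $\mathcal{C}_0$ is normally hyperbolic of saddle type, with a one-dimensional fast stable and a two-dimensional fast unstable direction, along the entire relevant arc $\hat\alpha\in[0,\hat\alpha_\mathrm{i}]$, where $w>0$ including the endpoint $\hat\alpha=0$, $w=\Omega/V$.

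Next I would construct the center-stable manifold $W^\mathrm{cs}$ announced before the theorem. Fenichel's theorem supplies, for $\eps>0$ small, a locally invariant slow manifold $\mathcal{C}_\eps$ that is $\rmO(\eps)$-close to $\mathcal{C}_0$ together with its stable foliation, so that $W^\mathrm{s}(\mathcal{C}_\eps)$ is three-dimensional and depends smoothly on $\eps$. The fractional powers of $\hat\alpha$ in \eqref{eq:3.3}, the least regular being $\hat\alpha^{2/3}$, make the vector field only Hölder at $\hat\alpha=0$; I would remove this by the substitution $\beta=\hat\alpha^{1/3}$, under which every exponent becomes an integer and $\hat\alpha'=-\eps\hat\alpha^{10/3}$ turns into the smooth $\beta'=-\tfrac{\eps}{3}\beta^{8}$, so the desingularized system is smooth up to and including $\hat\alpha=0$ and Fenichel applies uniformly on the whole arc. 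On $\mathcal{C}_\eps$ the reduced flow obeys $\Omega'=0$ and $\hat\alpha'=-\eps\hat\alpha^{10/3}<0$, so $\Omega$ is conserved, $\hat\alpha$ decreases monotonically to $0$, and every slow orbit in $\{\hat\alpha\geq0\}$ limits on the equilibrium $(\Omega/V,0,0,0,\Omega)$. This is exactly the asymptotic stability within $W^\mathrm{cs}=W^\mathrm{s}(\mathcal{C}_\eps)$ restricted to $\hat\alpha\geq0$, the approach being algebraic rather than exponential because of the degenerate exponent.

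The heart of the argument is then to produce a point of $W^\mathrm{cs}\cap\mathcal{B}_\mathrm{i}$. At $\eps=0$ the boundary manifold \eqref{e:scaled_bdyi} reduces to $\{w=\hat\alpha_\mathrm{i}\cot\vartheta_\mathrm{i},\ w_1=0,\ \hat\alpha=\hat\alpha_\mathrm{i}\}$ with $(w_2,\Omega)$ free, and this surface already meets $\mathcal{C}_0\subset W^\mathrm{s}(\mathcal{C}_0)$: the boundary point is itself a fast equilibrium precisely when $w_2=0$ and $\Omega=V\mathcal{M}=V\sqrt{\hat\alpha_\mathrm{i}^2\cot^2\vartheta_\mathrm{i}+\hat\alpha_\mathrm{i}^2}=V\hat\alpha_\mathrm{i}\csc\vartheta_\mathrm{i}=:\Omega_0$, which already yields the claimed $\Omega_*(0)=V\hat\alpha_\mathrm{i}\csc(\vartheta_\mathrm{i})$. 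I would then verify that this intersection at the point $P_0$ is transverse in $\R^5$: the tangent $T_{P_0}\mathcal{B}_\mathrm{i}=\mathrm{span}\{\partial_{w_2},\partial_\Omega\}$ together with $T_{P_0}W^\mathrm{s}(\mathcal{C}_0)=T_{P_0}\mathcal{C}_0\oplus E^\mathrm{s}$, where $T_{P_0}\mathcal{C}_0=\mathrm{span}\{\partial_w+\tfrac{Vw}{\mathcal{M}}\partial_\Omega,\ \partial_{\hat\alpha}+\tfrac{V\hat\alpha}{\mathcal{M}}\partial_\Omega\}$ and $E^\mathrm{s}=\mathrm{span}\{(1,\lambda_\mathrm{s},\lambda_\mathrm{s}^2,0,0)\}$ is the fast stable eigenvector, span all of $\R^5$ (a short row reduction using $\lambda_\mathrm{s}\neq0$). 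Since $\mathcal{B}_\mathrm{i}$, $\mathcal{C}_\eps$, and $W^\mathrm{s}(\mathcal{C}_\eps)$ all vary smoothly with $\eps$ near $P_0$, which sits at $\hat\alpha=\hat\alpha_\mathrm{i}>0$ away from the degenerate endpoint, the implicit function theorem continues the transverse intersection to a locally unique $P_\eps\in W^\mathrm{s}(\mathcal{C}_\eps)\cap\mathcal{B}_\mathrm{i}$ with $\Omega$-coordinate $\Omega_*(\eps)=\Omega_0+\rmO(\eps)$. The forward orbit $W(T)$ through $P_\eps$ starts on $\mathcal{B}_\mathrm{i}$, lies in $W^\mathrm{s}(\mathcal{C}_\eps)$ by invariance, and hence converges to $(\Omega_*(\eps)/V,0,0,0,\Omega_*(\eps))$ by the reduced-flow analysis above, which is the asserted solution.

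The step I expect to require the most care is the passage from the local picture near $P_0$ to global convergence as $T\to\infty$: the slow manifold is of saddle type with a two-dimensional unstable direction, so I must ensure the constructed orbit tracks $\mathcal{C}_\eps$ all the way down to $\hat\alpha=0$ without being expelled along the unstable fibers. Here I would rely on the invariance of $W^\mathrm{s}(\mathcal{C}_\eps)$ together with the monotone slow drift $\hat\alpha'<0$ and the uniform spectral gap from the first step (in the desingularized $\beta$-variable), in the spirit of an exchange-lemma estimate; the algebraic, non-hyperbolic approach to the endpoint equilibrium then has to be handled through the explicit scalar asymptotics of $\beta'=-\tfrac{\eps}{3}\beta^{8}$ rather than by a hyperbolic contraction argument.
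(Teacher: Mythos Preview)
Your proposal follows essentially the same route as the paper: identify the normally hyperbolic critical manifold $\mathcal{C}_0$, analyze the cubic via Vieta to get a $(1,2)$ stable/unstable splitting, build the three-dimensional $W^\mathrm{cs}$ by Fenichel, locate the intersection $W^\mathrm{cs}\cap\mathcal{B}_\mathrm{i}$ at $\eps=0$ on the critical manifold itself, verify transversality in $\R^5$ by a rank computation, and perturb. Your substitution $\beta=\hat\alpha^{1/3}$ to restore smoothness at $\hat\alpha=0$ and your explicit remark on the algebraic (non-hyperbolic) approach to the endpoint are useful technical additions that the paper glosses over, but the architecture of the argument is the same.
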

Note that by undoing the scalings, this yields a solution in complements of sufficiently large disks of radius $R_\mathrm{i}=\hat{R}_\mathrm{i}/\eps$, with frequency 
\[
    \omega(\eps)=\eps \frac{V}{\hat{R}_\mathrm{i}\sin\vartheta_\mathrm{i}}+\rmO(\eps^2).
\]
The factor $1/\sin\vartheta_\mathrm{i}$ is a simple geometric factor that stems from projecting the speed tangent to the boundary onto the normal; see Figure~\ref{f:proj}. The condition $\vartheta<\pi/2$ allows for $\Phi_r>0$ throughout, avoiding a turnover of the curve and points of high curvature which potentially induce instability. It is not a necessary condition when $D_4=0$, $D_2>0$. However, even in that case, $\vartheta\geq\pi/2$ induces nontrivial boundary layers and stronger corrections  $\rmO(\eps^{5/3})$ to the frequency; see Figure~\ref{f:proj}.
% f:proj
\begin{figure}
    \centering
    \vspace{15pt}\includegraphics[width=0.48\textwidth]{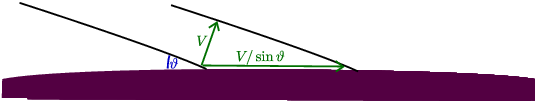}\vspace{15pt}
    \includegraphics[width=0.3\textwidth]{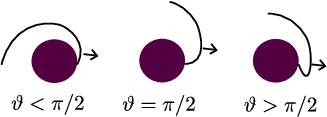}
    \caption{The limiting frequency is determined by the speed along the boundary, which is determined by the fact that its projection on the normal direction equals $V$ (left). acute contact angles allow for monotone $\Phi(r)$, whereas obtuse angles induce turning points and high curvature through necessary boundary layers (right).}\label{f:proj}
\end{figure}
\begin{proof}
Consider the flow of \eqref{eq:3.3} in the five-dimensional phase space $X$. We begin by considering the dynamics in the singular case $\varepsilon=0$, then extend our findings where possible to the nonsingular $0<\varepsilon\ll1$ case.

Accordingly, fix $\varepsilon=0$ and write $w_\mathrm{i}=\hat{\alpha}_\mathrm{i}\cot(\vartheta_\mathrm{i})$. Now choose $\Omega_*>0$ such that $\Omega_*^2=V^2(w_\mathrm{i}^2+\hat\alpha_\mathrm{i}^2)$. The circle segment $\Sigma_0\subseteq X$ given by the equation $\Omega_*^2=V^2(w^2+\hat\alpha^2)$, $w_1=w_2=0$, and the conditions $\hat\alpha\geq0,\,w\geq \delta>0$, forms a smooth manifold of equilibria (it in fact extends smoothly beyond the positive quadrant); see Figure~\ref{fig:nonsingularization} (left).
% fig:nonsingularization
\begin{figure}
    \centering
    \begin{subfigure}[c]{0.34\textwidth}
        \includegraphics[width=\textwidth]{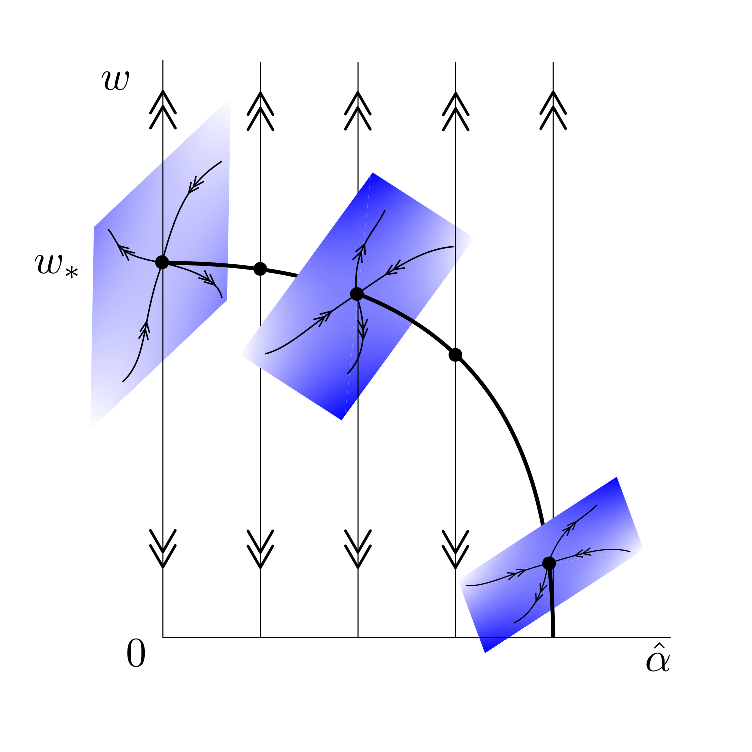}
        % \caption{}
    \end{subfigure}
    \begin{subfigure}[c]{0.26\textwidth}
        \includegraphics[width=\textwidth]{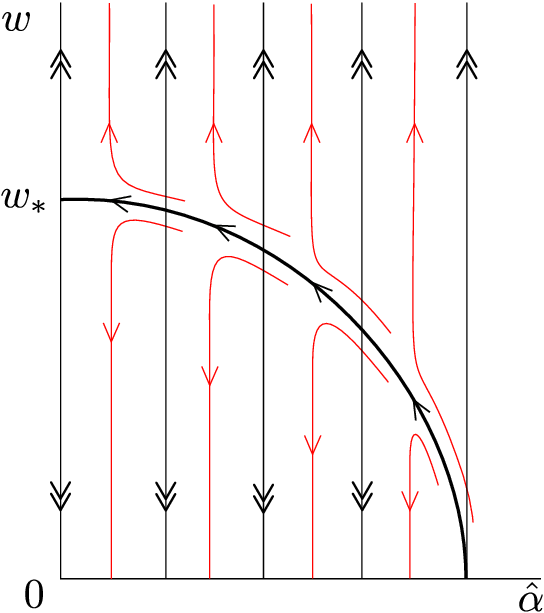}
        % \caption{}
    \end{subfigure}
    
    \caption{Flows of \eqref{eq:3.3} in $\R^4$ for fixed $\Omega$ with $\eps=0$ (left) and $\eps\gtrsim 0$ (right). The vertical hyperplanes $\alpha= const$ are invariant in the left panel and the quarter circle consists entirely of equilibria with 2d strong unstable and 1d strong stable manifolds. We show a projection of unstable fibers and their dynamics onto the $(\hat{\alpha},w)$-plane in the right panel. Strong stable and unstable manifolds continue as invariant foliations (black) for $\eps\gtrsim 0$. Also shown  are sample trajectories (red) off the slow manifold, again projected onto the $(\hat{\alpha},w)$-plane. 
    The boundary condition is a line inside fixed $\hat{\alpha}=1/\hat{R}_\mathrm{i}$.}
    \label{fig:nonsingularization}
\end{figure}

We claim that this manifold $\Sigma_0$ is normally hyperbolic in the sense of Fenichel, that is, the linearization at any equilibrium on $\Sigma_0$ possesses precisely 2 zero eigenvalues, associated with the parameter $\Omega$ and the tangent vector to $\Sigma_0$. Therefore, we compute the Jacobian of \eqref{eq:3.4} at  the point
$(\sqrt{\Omega_*^2/V^2-\hat\alpha^2},0,0,\hat{\alpha},\Omega_*)$,
\begin{equation}\label{eqn:e0-system-jacobian2}
\begin{pNiceMatrix}[columns-width=0pt]
 0 & 1 & 0 & 0 & 0 \\
 0 & 0 & 1 & 0 & 0 \\
 -\dfrac{\Omega_*^3}{D_4V^2}\sqrt{\dfrac{\Omega_*
   ^2}{V^2}-\hat\alpha ^2} & \dfrac{\hat\alpha^{2/3} D_2\Omega_*^2}{D_4V^2} & 0 & -\dfrac{\hat\alpha\Omega_*^3}{D_4V^2} & \dfrac{\Omega_*^4}{D_4V^4} \\ 
 0 & 0 & 0 & 0 & 0 \\
 0 & 0 & 0 & 0 & 0 \\
\end{pNiceMatrix}\,.
\end{equation}
The characteristic polynomial possesses a double root at the origin and three other roots, found as solutions to  \begin{equation}\label{eqn:eigenvalue-polynomial}
0=z^3-V^{14}\hat\alpha^{2/3}\Omega_*^2D_2D_4^7z+V^{22}\Omega_*^3D_4^{11}\sqrt{\frac{\Omega_*^2}{V^2}-\hat\alpha^2}\,.
\end{equation}
We now show that exactly one of these roots has a negative real part and all roots are off the imaginary axis. According to Vieta's formulas, the quadratic coefficient of a cubic polynomial is the sum of its roots and the constant term is the negative of the product of the roots. Since $D_4>0$, the constant term is always positive, so we have either 1 or 3 roots with negative real part. Since the quadratic coefficient is zero, the real parts of the roots sum to zero, so there exists precisely one negative real root which we denote by $z_\mathrm{s}$. The eigenvector associated with $z_\mathrm{s}$ is
\[
\mathbf{e}_s :=
\begin{pmatrix*}
1 & z_\mathrm{s} & z_\mathrm{s}^2 & 0 & 0
\end{pmatrix*}^T\,.
\]
This shows that each point 
$(\sqrt{\Omega_*^2/V^2-\hat\alpha^2},0,0,\hat{\alpha},\Omega_*)\in\Sigma_0$ possesses a two-dimensional center-manifold  $W_{0,\hat\alpha}^c$, a one-dimensional strong stable  $W_{0,\hat\alpha}^s$, and a two-dimensional strong unstable manifold $W_{0,\hat\alpha}^u$. We define the \textit{center-stable manifold} of the system to be the union of the strong stable manifolds through all points,
\[W_0^\mathrm{cs}:=\bigcup_{\hat\alpha\in[0,\Omega_*/V]}\bigcup_{\Omega\sim\Omega_*}(W_{0,\hat\alpha}^c\cup W_{0,\hat\alpha}^s)\,.\]
Clearly, $W_0^\mathrm{cs}$ is 3-dimensional, with tangent space at $\Sigma_0$ given by the sum of the eigenspace associated with the eigenvalue  $z_\mathrm{s}$, and the center eigenspace, spanned by
% None of the $W_{0,\hat\alpha}^s$ or $W_{0,\hat\alpha}^c$, which comprise trajectories in $X$, intersect. It follows that the above union is disjoint and $W_0^\mathrm{cs}$ can be made a manifold by inheriting its constituents' charts. Being the continuous union of one-dimensional manifolds along one parameter, $W_0^\mathrm{cs}$ is two-dimensional.
% 
% Next, notice the kernel of the Jacobian is two-dimensional, since the last two rows are zero rows and the first three rows are linearly independent. Two eigenvectors spanning the kernel (equivalently, the zero eigenspace) are
\[
\mathbf{e}_{\hat\alpha} :=\renewcommand{\arraystretch}{1.5}
\begin{pmatrix*}
\mathllap{-}\hat\alpha\vphantom{\dfrac{\Omega_*}{V^2}} \\ 0 \\ 0 \\ \sqrt{\dfrac{\Omega_*^2}{V^2}-\hat\alpha^2} \\ 0
\end{pmatrix*}
\quad,\quad
\mathbf{e}_{\Omega} :=
\begin{pmatrix*}
\dfrac{\Omega_*}{V^2} \\ 0 \\ 0 \\ 0 \\ \sqrt{\dfrac{\Omega_*^2}{V^2}-\hat\alpha^2}
\end{pmatrix*}\,.\]
% Together with $\mathbf{e}_s$, these vectors span the tangent space of $W_0^\mathrm{cs}$ passing through the point with coordinates~$(\sqrt{\Omega_*^2/V^2-\hat\alpha^2},0,0,\hat{\alpha},\Omega_*)$.
% 
The manifold $\mathcal{B}_\mathrm{i}$ defined by the boundary conditions \eqref{e:scaled_bdyi} is a plane on which $w$, $w_1$, $\alpha$ are constant and $w_2$, $\Omega$ are free. The plane contains the point ${W}_*:=(w_*,0,0,\hat\alpha_*,0)$, hence does intersect $W_0^\mathrm{cs}$. Its tangent space is spanned by the vectors
\begin{equation}\label{e:bdybasis}
\mathbf{b}_1:=\begin{pmatrix*}
0 & 0 & 1 & 0 & 0
\end{pmatrix*}^T
\quad,\quad
\mathbf{b}_2:=\begin{pmatrix*}
0 & 0 & 0 & 0 & 1
\end{pmatrix*}^T.
\end{equation}
We claim that the intersection of the two-dimensional manifold of boundary conditions and the three-dimensional center-stable manifold is transverse. Equivalently, we need to show that the matrix 
$\begin{pNiceMatrix}[vlines]
\mathbf{b}_1 & \mathbf{b}_2 & \mathbf{e}_s & \mathbf{e}_{\hat\alpha} & \mathbf{e}_{\Omega}   
\end{pNiceMatrix}$, 
whose columns span the tangent spaces of $W_0^\mathrm{cs}$ and $\mathcal{B}_\rmi$, given explicitly through
\[
T=\begin{pNiceMatrix}[columns-width = 20pt]
 0 & 0 & 1 & \mathllap{-}\hat\alpha  & \dfrac{\Omega_*}{V^2} \\
 0 & 0 & z_\mathrm{s} & 0 & 0 \\
 1 & 0 & z_\mathrm{s}^2 & 0 & 0 \\
 0 & 0 & 0 & \sqrt{\dfrac{\Omega_*^2}{V^2}-\hat\alpha^2} & 0 \\
 0 & 1 & 0 & 0 & \sqrt{\dfrac{\Omega_*^2}{V^2}-\hat\alpha^2} \\
\end{pNiceMatrix}
\]
has full rank, which in turn follows readily from computing the determinant as $\mathrm{det}(T)=-T_{23}T_{44}T_{15}\neq 0$.

% and determinant
% \[\frac{P^2\Omega_*^2}{V^2(5\vert\Omega_*\vert-4\Omega_*)} \sqrt{\frac{\Omega_*^2}{V^2}-\hat\alpha^2}=\frac{P^2w_*\sqrt{w_*^2+\hat\alpha_*^2}}
% {5-4\operatorname{sgn}{\Omega_*}}\,,\]
% where $\operatorname{sgn}{\Omega_*}$ is the sign of $\Omega_*$. This determinant is zero only when $w_*=0$, indicating that, save perhaps at its boundary, $W_0^\mathrm{cs}$ always intersects $B$ transversely. Recalling that $\hat\alpha'=0$, we note that the projection of $W_0^\mathrm{cs}$ onto the $w$-$\hat\alpha$ plane looks like vertical trajectories converging onto each point of $\Sigma_0$.

% We now turn to consider the case $0<\varepsilon\ll1$. Clearly, ${W}_* :=(\Omega_*/V,0,0,0,\Omega_*)$ is an equilibrium of the nonsingular system.

Fenichel's geometric singular perturbation theory \cite{fen} implies that the manifold $W_0^\mathrm{cs}$ depends smoothly on $\eps$ as a smooth manifold, where smooth here refers to the class $C^k$ for any fixed finite $k$. Since the boundary conditions also depend smoothly on $\eps$, the transverse intersection $W_0^\mathrm{cs}\cap \mathcal{B}_\mathrm{i}$ persists and the locally unique intersection point depends smoothly on $\eps$. Since $W^\mathrm{cs}$, again according to \cite{fen} is smoothly fibered by strong stable foliations over $W_0^\mathrm{c}$, trajectories follow the flow in $W_0^\mathrm{c}$, which simply preserves the parameter $\Omega$ and decreases $\hat{\alpha}\to 0$ so that all solutions in $W_0^\mathrm{cs}$ converge to the equilibrium $W_*$. 
\end{proof}

The last result of this section is concerned with finite annuli, mimicking Theorem~\ref{t:ex_finite}.
\begin{theorem}\label{t:ex_finite}
    Fix $V, D_4 > 0$,  $D_2\in\R$ and consider the system \eqref{eq:3.3} with boundary conditions \eqref{e:scaled_bdyi} and  \eqref{e:scaled_bdyo}, and $0<\vartheta_\mathrm{i}<\pi/2$. Assume a compatible outer contact angle
    \begin{equation}          \label{e:comp}    
R_\mathrm{o}\sin(\vartheta_\mathrm{o})=R_\mathrm{i}\sin(\vartheta_\mathrm{i}).
    \end{equation}
    Then, for all $\eps>0$ sufficiently small, there exists $\Omega_*(\eps)$ and a solution $W(T)=(w,w_1,w_2,\hat{\alpha},\Omega)(T)$ with  
    \[
    W(0)\in\mathcal{B}_\mathrm{i},\qquad 
    W(T)\in\mathcal{B}_\mathrm{o}.
    \]
    Moreover, we have $\Omega_*(0)=V\hat{\alpha}_\mathrm{i}\csc(\vartheta_\mathrm{i})$.
\end{theorem}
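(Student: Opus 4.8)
The plan is to parallel the proof of Theorem~\ref{t:ex_infinite}: at $\eps=0$ I would build a singular orbit that lies entirely on the slow manifold $\Sigma_0$ and connects a point of $\mathcal{B}_\mathrm{i}$ to a point of $\mathcal{B}_\mathrm{o}$, and then continue it to $0<\eps\ll1$ by geometric singular perturbation theory. The first step is to pin down the role of \eqref{e:comp}. Writing $w_\mathrm{i}=\hat\alpha_\mathrm{i}\cot\vartheta_\mathrm{i}$ and $w_\mathrm{o}=\hat\alpha_\mathrm{o}\cot\vartheta_\mathrm{o}$, the point $W_*^\mathrm{i}=(w_\mathrm{i},0,0,\hat\alpha_\mathrm{i},\Omega_*)$ lies on $\Sigma_0$ iff $\Omega_*^2=V^2(w_\mathrm{i}^2+\hat\alpha_\mathrm{i}^2)=V^2\hat\alpha_\mathrm{i}^2\csc^2\vartheta_\mathrm{i}$, and $W_*^\mathrm{o}=(w_\mathrm{o},0,0,\hat\alpha_\mathrm{o},\Omega_*)\in\Sigma_0$ iff $\Omega_*^2=V^2\hat\alpha_\mathrm{o}^2\csc^2\vartheta_\mathrm{o}$. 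Condition \eqref{e:comp}, equivalent to $\hat\alpha_\mathrm{i}\csc\vartheta_\mathrm{i}=\hat\alpha_\mathrm{o}\csc\vartheta_\mathrm{o}$, is exactly what reconciles these two requirements with the single value $\Omega_*=V\hat\alpha_\mathrm{i}\csc\vartheta_\mathrm{i}$, so that $W_*^\mathrm{i}\in\Sigma_0\cap\mathcal{B}_\mathrm{i}$ and $W_*^\mathrm{o}\in\Sigma_0\cap\mathcal{B}_\mathrm{o}$ lie on one and the same equilibrium arc. Since $\Omega'=0$ and $\hat\alpha'=-\eps\hat\alpha^{10/3}$ drive $\hat\alpha$ monotonically downward, the reduced slow flow on $\Sigma_0$, in slow time $s=\eps T$, connects $W_*^\mathrm{i}$ to $W_*^\mathrm{o}$ over the finite time $s_*=\int_{\hat\alpha_\mathrm{o}}^{\hat\alpha_\mathrm{i}}\hat\alpha^{-10/3}\,\rmd\hat\alpha$; this is the singular orbit.

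Next I would record the two endpoint transversality conditions against the normally hyperbolic $\Sigma_0$, whose splitting into a $2$-dimensional center ($\mathbf{e}_{\hat\alpha},\mathbf{e}_\Omega$), a $1$-dimensional strong stable, and a $2$-dimensional strong unstable bundle is already in hand. The entry condition $\mathcal{B}_\mathrm{i}\pitchfork W^\mathrm{cs}$ at $W_*^\mathrm{i}$ is precisely the transversality established through $\det(T)\neq0$ in the proof of Theorem~\ref{t:ex_infinite}. For the exit I would verify the analogous condition $\mathcal{B}_\mathrm{o}\pitchfork W^\mathrm{cu}$ at $W_*^\mathrm{o}$, where the $4$-dimensional center-unstable manifold $W^\mathrm{cu}$ has tangent space spanned by $\mathbf{e}_{\hat\alpha},\mathbf{e}_\Omega$ together with the two eigenvectors $(1,z,z^2,0,0)^T$ attached to the two roots of \eqref{eqn:eigenvalue-polynomial} of positive real part. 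As these two unstable roots are distinct, these four vectors together with $\mathbf{b}_1,\mathbf{b}_2$ from \eqref{e:bdybasis} span $\R^5$, which is the required transversality.

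Persistence to $0<\eps\ll1$ then follows from Fenichel theory combined with the Exchange Lemma. Because $\hat\alpha$ solves the autonomous scalar equation $\hat\alpha'=-\eps\hat\alpha^{10/3}$, the transit time $T_*(\eps)=\frac1\eps\int_{\hat\alpha_\mathrm{o}}^{\hat\alpha_\mathrm{i}}\hat\alpha^{-10/3}\,\rmd\hat\alpha$ at which $\hat\alpha(T_*)=\hat\alpha_\mathrm{o}$ is fixed independently of the other coordinates, and both $\Psi_{T_*}(\mathcal{B}_\mathrm{i})$ and $\mathcal{B}_\mathrm{o}$ lie in the slice $\{\hat\alpha=\hat\alpha_\mathrm{o}\}\cong\R^4$. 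Applying the Exchange Lemma to the flow of $\mathcal{B}_\mathrm{i}$ as it enters a neighborhood of the persistent slow manifold $\Sigma_\eps$ transverse to $W^\mathrm{cs}$, shadows $\Sigma_\eps$, and exits near $\hat\alpha_\mathrm{o}$, shows that $\Psi_{T_*}(\mathcal{B}_\mathrm{i})$ is $C^1$-close, up to an error $\rme^{-c/\eps}$ for some $c>0$, to $W^\mathrm{cu}$ at the exit point. Together with $\mathcal{B}_\mathrm{o}\pitchfork W^\mathrm{cu}$ this produces a transverse, hence nonempty and locally unique, intersection $\Psi_{T_*}(\mathcal{B}_\mathrm{i})\cap\mathcal{B}_\mathrm{o}$ in the slice, depending smoothly on $\eps$ and selecting $\Omega_*(\eps)$ with $\Omega_*(0)=V\hat\alpha_\mathrm{i}\csc\vartheta_\mathrm{i}$. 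Undoing the scalings then yields the claimed solution and frequency.

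I expect the main obstacle to be exactly the unbounded transit time $T_*(\eps)=\rmO(1/\eps)$ in the presence of the $2$-dimensional unstable directions: over such a time the naive time-$T$ endpoint map has derivatives that grow like $\rme^{c/\eps}$, so a direct implicit function argument fails and one must invoke the $C^1$, exponentially-small-error form of the Exchange Lemma to control the alignment of $\Psi_{T_*}(\mathcal{B}_\mathrm{i})$ with $W^\mathrm{cu}$. A second point that must be handled with care is that at $\eps=0$ the intersection $\Sigma_0\cap\mathcal{B}_\mathrm{i}$ is non-transverse; it is the conserved parameter $\Omega$ together with \eqref{e:comp} that rescues the construction, since without compatibility the slow solution cannot meet the prescribed value of $w$ at both boundaries simultaneously and the singular connection ceases to exist.
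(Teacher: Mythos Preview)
Your proposal is correct and follows essentially the same approach as the paper: both proofs use the compatibility condition \eqref{e:comp} to place $W_*^\mathrm{i}$ and $W_*^\mathrm{o}$ on the same level set $\Omega=\Omega_*$ of the slow manifold, verify the pair of transversalities $\mathcal{B}_\mathrm{i}\pitchfork W^\mathrm{cs}$ (inherited from Theorem~\ref{t:ex_infinite}) and $\mathcal{B}_\mathrm{o}\pitchfork W^\mathrm{cu}$, and then invoke the Exchange Lemma. The only cosmetic difference is direction: you flow $\mathcal{B}_\mathrm{i}$ forward and compare $\Psi_{T_*}(\mathcal{B}_\mathrm{i})$ with $W^\mathrm{cu}$ at the outer slice, whereas the paper flows $\mathcal{B}_\mathrm{o}$ backward and compares $\bigcup_T\Phi_{-T}(\mathcal{B}_\mathrm{o})$ with $W^\mathrm{cs}$ at the inner slice; these are mirror images under time reversal.

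One genuine, if minor, gap: you assert that the two unstable roots of \eqref{eqn:eigenvalue-polynomial} are distinct. This is not guaranteed along the entire arc $\Sigma_0$, and in particular may fail at the outer boundary point. The paper covers the repeated-root case separately, replacing the pair $(1,z_j,z_j^2,0,0)^T$ by $(1,z_1,z_1^2,0,0)^T$ and $(0,1,2z_1,0,0)^T$ and rechecking invertibility of the resulting matrix. You should include this case to close the argument.
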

\begin{proof}
    We first choose $\Omega_*$ as in the proof of Theorem~\ref{t:ex_infinite}. 
    We then invoke the Exchange Lemma \cite{JK_exchange,JKK_exchange,brunovsky_exchange} to show that $\Phi_{-T}(\mathcal{B}_\mathrm{o})$ converges exponentially to $W_0^\mathrm{cs}$.
    For this, we first note that $\mathcal{B}_\mathrm{o}\cap\Sigma_0\ni (w_\mathrm{o},0,0,\hat{\alpha}_\mathrm{o},\Omega_*)$, due to \eqref{e:comp}. We then claim that the intersection of $\mathcal{B}_\mathrm{o}$ and $W_0^\mathrm{cu}\cap\{\Omega=\Omega_*\}$ is 
    transverse. With this, it follows that $\bigcup_T \Phi_T(\mathcal{B}_\mathrm{o})\cap \{\hat{\alpha}=\hat{\alpha}_\mathrm{i}\}$ is exponentially $\rme^{-\delta/\eps}$-close to $W_0^\mathrm{cs}\cap \{\hat{\alpha}=\hat{\alpha}_\mathrm{i}\}$ and existence of spiral waves follows as in Theorem~\ref{t:ex_infinite}. It remains to check transversality. Basis vectors for the tangent space of $\mathcal{B}_\mathrm{o}$ are  $\mathbf{b}_1$ and $\mathbf{b}_2$ from \eqref{e:bdybasis}. The basis for the unstable eigenspace is spanned by vectors 
    \[
    \mathbf{e}_{1/2} :=
    \begin{pmatrix*}
    1 & {z_\mathrm{s}}_{1/2} & {z_\mathrm{s}}_{1/2}^2 & 0 & 0
    \end{pmatrix*}^T\,.
    \]
    where ${z_\mathrm{s}}_{1/2}$ are the unstable roots of \eqref{eqn:eigenvalue-polynomial}, assuming those are different for now.  We can then proceed as in the proof of Theorem~\ref{t:ex_infinite} and form the matrix of basis vectors to the boundary and the unstable manifold, $\begin{pNiceMatrix}[vlines]
    \mathbf{b}_1 & \mathbf{b}_2 & \mathbf{e}_{1}  & \mathbf{e}_{21}& \mathbf{e}_{\hat\alpha} 
    \end{pNiceMatrix}$, 
    \[
        T=\begin{pNiceMatrix}[columns-width = 25pt]
         0 & 0 & 1 & 1& \mathllap{-}\hat\alpha \\
         0 & 0 & {z_\mathrm{s}}_1 &{z_\mathrm{s}}_2 & 0  \\
         1 & 0 & {z_\mathrm{s}}_1^2 & {z_\mathrm{s}}_2^2& 0 \\
         0 & 0 & 0 & 0&\sqrt{\dfrac{\Omega_*^2}{V^2}-\hat\alpha^2} \\
         0 & 1 & 0 & 0 &0\\
        \end{pNiceMatrix}\,,
    \]
    and readily find that $T$ is invertible, thus showing transversality and concluding the proof. In case of a repeated eigenvalue ${z_\mathrm{s}}_1$ a basis of the unstable eigenspace is $(1,{z_\mathrm{s}}_1,{z_\mathrm{s}}_1^2,0,0)$ and $(0,1,2{z_\mathrm{s}}_1,0,0)$. Substituting this into $T$ again gives an invertible matrix and thus existence.
\end{proof}

\begin{remark}\label{r:90}
We note that in the proof, transversality and normal hyperbolicity fail at $w_*=0$, precluding us from allowing a contact angle $\vartheta=\pi/2$. This case was analyzed in the situation $D_2>0,D_4=0$ in \cite{li2024anchoredspiralsdrivencurvature} by identifying the dynamics at the non-normally hyperbolic point as a slow passage through a saddle-node. Our situation here is significantly complicated in comparison to \cite{li2024anchoredspiralsdrivencurvature}  since in addition to the zero eigenvalue associated with the saddle-node, we simultaneously encounter a crossing of purely imaginary eigenvalues. 
\end{remark}

\begin{remark}\label{r:generalbc}
    From the proof, it is clear that boundary conditions should intersect $W_0^\mathrm{cs}$ transversely. If the intersection occurs in a point on $W_0^\mathrm{c}$, the solution closely follows the trajectory on $W_0^\mathrm{c}$, which is simply the solution discussed at the end of Section~\ref{s:2} for the case $D_2=D_4=0$. Intersections away from $W_0^\mathrm{c}$ induce exponentially localized boundary layers near $r=R_\mathrm{i}$. Understanding the possible existence or even stability of such boundary layers is beyond the scope of this work. 
    
    On the other hand, our arguments are robust, so small changes to boundary conditions will not affect the result. Also, boundary conditions that prescribe $\kappa_s$ instead of $\kappa$ lead to similar results, replacing $\mathbf{b}_1$ by $\mathbf{b}_1=(0,1,0,0,0)$ and again $\mathrm{det}(T)\neq 0$.  
\end{remark}

We also note that the value and in particular the sign of $D_2$ are irrelevant to both Theorem~\ref{t:ex_infinite} and \ref{t:ex_finite}. We did see in numerical experiments that the sign of $D_2$ does affect the existence of boundary layers, that is, it can induce bifurcations when boundary conditions are not close to ``compatible''. More dramatically, we will see in the next section that negative values of $D_2$ do induce instabilities and thereby bifurcations, albeit of an oscillatory nature, which is invisible in the ODE for rigidly rotating solutions. 

%%%%%%%%%%%%%%%%%%%%%%%%%%%%%%%%%%%%%%%%%%%%%%%%%%%%%%
\section{Eigenvalues of the linearization at spiral waves near the eikonal limit}\label{s:4}

In this section, we analyze the effect of negative $D_2$ on the stability of the solutions that we found in Theorems \ref{t:ex_infinite} and \ref{t:ex_finite}. Results in this section follow the geometric ideas laid out thus far and yield predictions for the onset and nature of instability, without rigorously establishing stability or instability in bounded or unbounded geometry. 

For ease of exposition, we first focus on the case of Theorem~\ref{t:ex_finite}, when the equation is strictly parabolic and the linearized operators are simply fourth-order elliptic differential operators in a bounded domain. Towards the end, we shall discuss the implications on stability questions in an unbounded setting, Theorem~\ref{t:ex_infinite}. The key idea of our analysis is that, similar to the existence problem, the equation for eigenfunctions has a slow-fast structure as it inherits the slowly varying coefficients of the existence problem, given by the eikonal solution. The location of eigenvalues in such singularly perturbed problems was analyzed to some extent in \cite{CRS}. The main result there states that eigenvalues accumulate, as $\eps\to 0$, near locations of the absolute spectrum \cite{ssabs}. 

We start our analysis with a slightly modified formulation of the existence problem  \eqref{eq:2.6}.  The calculations here are completely analogous to Section~\ref{s:3}, with however lengthier and more cumbersome expressions. We demonstrate here the sequence of transformations retaining only one term with coefficient $D_4,D_2$, and $V$, respectively, as other terms with these coefficients scale analogously, and give the complete end result in \eqref{e:alllin}.  We set $u:= \phi_r$, and rewrite 
\begin{align}\label{eqn:alt-main-eq}
    u_t  &= \left[-\frac{D_4 u_{rrr}}{(1 + r^2 u^2)^2} + \frac{D_2 u_r}{(1 + r^2 u^2)} - \frac{V}{r}(1 + r^2 u^2)^{1/2} + \ldots\right]_r \nonumber\\
    &=: [m]_r\,,
\end{align} 
where $m = -\frac{M}{r}(V+D_2\kappa - D_4 \kappa_{ss})$, and we omitted most nonlinear terms. We first focus on equilibrium solutions $\overline{u}$, which solve, setting again  $\alpha = \frac{1}{r}$,
\begin{equation}\label{eqn:equilibrium-1}
\begin{alignedat}{2}
& \overline{u}_r & &= \overline{v}\,, \\
& \overline{v}_r & &= \overline{w}\,, \\
& \overline{w}_r & &= -\alpha^{-4} \frac{(\alpha^2 + \overline{u}^2)^2}{D_4}\overline{m} + \alpha^{-2}\frac{D_2\overline{v}}{D_4}(\alpha^2 + \overline{u}^2) - \alpha^{-4}\frac{V}{D_4}(\alpha^2 + \overline{u}^2)^{5/2} + \ldots\,,\\
& \overline{m}_r & &= 0\,.
\end{alignedat}
\end{equation}
Note that the constant of integration $\overline{m}$ was referred to as $\omega$, previously. Again, a first change of variables removes the $\alpha^{-4}$ singularity, setting
\begin{equation}
\begin{cases}
    \overline{u} = \overline{u}_1\\
    \overline{v} = \alpha^{-4/3} \overline{v}_1\\
    \overline{w} = \alpha^{-8/3} \overline{w}_1\\
    \overline{m} = \overline{m}_1\\
    \tau = \frac{3}{7}r^{7/3}\\
\end{cases}
\end{equation}
% $\overline{u} = \overline{u}_1$, $\overline{v} = \alpha^{-4/3} \overline{v}_1$, $\overline{w} = \alpha^{-8/3} \overline{w}_1$, $\overline{m} = \overline{m}_1$, $\tau = \frac{3}{7}r^{7/3}$, 
such that $\frac{{\rmd }\tau}{{\rmd }r} = \alpha^{-4/3}$, to obtain
\begin{equation}
\begin{alignedat}{2}
    & \overline{u}_{1\tau} & &= \overline{v}_1\,, \\
    & \overline{v}_{1\tau} & &= \overline{w}_1\,, \\
    & \overline{w}_{1\tau} & &= \frac{(\alpha^2 + \overline{u}_1^2)^2}{D_4}\overline{m}_1 + \alpha^{2/3}\frac{D_2\overline{v}_1}{D_4}(\alpha^2 + \overline{u}_1^2) - \frac{V}{D_4}(\alpha^2 + \overline{u}_1^2)^{5/2} + \ldots\,,\\
    & \overline{m}_{1\tau} & &= 0\,, \\
    & \alpha_\tau & &= -\alpha^{10/3}\,.
\end{alignedat}
\end{equation}
Next, we linearize \eqref{eqn:alt-main-eq} at this equilibrium and look for solutions $u\rme^{\lambda t}$. Still displaying only sample terms, we find
\begin{equation}
\begin{alignedat}{2}
& u_r & &= v\,, \\
& v_r & &= w\,, \\
& w_r & &= -\alpha^{-4} \frac{(\alpha^2 + \overline{u}^2)^2}{D_4}m + \alpha^{-2}\frac{D_2}{D_4}(\alpha^2 + \overline{u}^2)v - 5\alpha^{-4}\frac{V}{D_4}(\alpha^2 + \overline{u}^2)^{3/2}\overline{u}u + \ldots\,,\\
& m_r & &= \lambda u\,.
\end{alignedat}
\end{equation}
Again removing the $\alpha^{-4}$ singularity,  we implement the  change of variables 
\begin{equation}
    \begin{cases}
        \overline{u} = \overline{u}_1\\
        \overline{v} = \alpha^{-4/3} \overline{v}_1\\
        \overline{w} = \alpha^{-8/3} \overline{w}_1\\
        \overline{m} = \overline{m}_1
    \end{cases}
    \qquad
    \begin{cases}
        u = u_1\\
        v = \alpha^{-4/3} v_1\\
        w = \alpha^{-8/3} w_1\\
        m = m_1\\
        \tau = \frac{3}{7}r^{7/3}
    \end{cases}
\end{equation}
% \[
% \overline{u} = \overline{u}_1,\ \overline{v} = \alpha^{-4/3} \overline{v}_1,\ \overline{w} = \alpha^{-8/3} \overline{w}_1, \  \overline{m} = \overline{m}_1,
% \] 
% \[
% u = u_1,\ v = \alpha^{-4/3} v_1,\ w = \alpha^{-8/3} w_1,\ m = m_1,\ \tau = \frac{3}{7}r^{7/3},
% \]
obtaining
\begin{equation}
\begin{alignedat}{2}
& u_{1\tau} & &= v_1\,, \\
& v_{1\tau} & &= w_1\,, \\
& w_{1\tau} & &= - \frac{(\alpha^2 + \overline{u}_1^2)^2}{D_4}m_1 + \alpha^{2/3}\frac{D_2}{D_4}(\alpha^2 + \overline{u}_1^2)v_1 - 5\frac{V}{D_4}(\alpha^2 + \overline{u}_1^2)^{3/2}\overline{u}_1 u_1 + \ldots\,,\\
& m_{1\tau} & &= \lambda \alpha^{4/3} u_1\,.
\end{alignedat}
\end{equation}
With this setup, we proceed to implement an $\varepsilon$-scaling with
\begin{equation}
    \begin{cases}
        u_1 = \varepsilon u_2\\
        v_1 = \varepsilon^{7/3} v_2\\
        w_1 = \varepsilon^{11/3} w_2\\
        m_1 = \varepsilon m_2\\
    \end{cases}
    \qquad
    \begin{cases}
        \overline{u}_1 = \varepsilon \overline{u}_2\\
        \overline{v}_1 = \varepsilon^{7/3} \overline{v}_2\\
        \overline{w}_1 = \varepsilon^{11/3} \overline{w}_2\\
        \overline{m}_1 = \varepsilon \overline{m}_2\\
        \alpha = \varepsilon \alpha_2\\
        \lambda = \varepsilon^{4/3} \lambda_2\\
        \tau = \varepsilon^{4/3} \sigma
    \end{cases}
\end{equation}
% $u_1 = \varepsilon u_2, v_1 = \varepsilon^{7/3} v_2, w_1 = \varepsilon^{11/3} w_2, m_1 = \varepsilon m_2, \overline{u}_1 = \varepsilon \overline{u}_2, \overline{v}_1 = \varepsilon^{7/3} \overline{v}_2, \overline{w}_1 = \varepsilon^{11/3} \overline{w}_2, \overline{m}_1 = \varepsilon \overline{m}_2,\alpha = \varepsilon \alpha_2, \lambda = \varepsilon^{4/3} \lambda_2$, and $\sigma$ 
such that $\partial_{\tau} = \varepsilon^{4/3}\partial_\sigma$. Substituting $\varepsilon=0$, we find the following system, now displaying all terms,
\begin{equation}\label{e:alllin}
\begin{alignedat}{2}
& u_{2\sigma} & &= v_2\,, \\
& v_{2\sigma} & &= w_2\,, \\
& w_{2\sigma} & &= F u_2 +   G v_2 + H w_2 + K m_2\,,\\
& m_{2\sigma} & &= \lambda_2 \alpha_2 u_2\,, \\
& \alpha_{2\sigma} & &= 0\,,
\end{alignedat}
\end{equation}
where
\begin{align*}
    F := &\frac{1}{D_4(\alpha_2^2 + \overline{u}_2^2)^3} \bigg(6D_4 \overline{u}_2 (3\alpha_2^2 - 5\overline{u}_2^2)\overline{v}_2^3 + 10D_4(\alpha_2^4 - \overline{u}_2^4)\overline{v}_2\overline{w}_2 -4\overline{m}_2\overline{u}_2(\alpha_2^2 + \overline{u}_2^2)^4 - 5 V \overline{u}_2 (\alpha_2^2 + \overline{u}_2^2)^{9/2}\bigg)\,,\\
    G := &\frac{1}{D_4 (\alpha_2^2 + \overline{u}_2^2)^3}\bigg(D_2\alpha_2^{2/3}(\alpha_2^2 + \overline{u}_2^2)^4 + 9D_4(\alpha_2^4 + 6\alpha_2^2\overline{u}_2^2 + 5\overline{u}_2^4)\overline{v}_2^2  + 10 D_4 \overline{u}_2 (\alpha_2^2 + \overline{u}_2^2) ^2\overline{w}_2\bigg)\,,\\
    H := &\frac{10\overline{u}_2\overline{v}_2}{(\alpha_2^2 + \overline{u}_2^2)}\,,\\
    K := &- \frac{(\alpha_2^2 + \overline{u}_2^2)^2}{D_4}\,.
\end{align*}
Writing the equation for the first four variables  $(u_2,v_2,w_2,m_2)$, in matrix form, we find the matrix
\[
\mathcal{A}(\lambda)=\begin{pNiceMatrix}
    0 & 1 & 0 & 0\\
    0 & 0 & 1 & 0\\
    F & G & H & K\\
    \lambda\mathrlap{_2} & 0 & 0 & 0 
\end{pNiceMatrix}.
\]
with nonnegative determinant $\lambda_2 (\alpha_2^2 + \overline{u}_2^2)^2/D_4$. Furthermore, recall that equilibrium solutions  $\overline{u}$  have  $\overline{v}_2 = \overline{w}_2 = 0$ and $(\alpha_2^2 + \overline{u}_2^2) = (\overline{m}_2/V)^2$. The characteristic polynomial of $\mathcal{A}(\lambda)$ then simplifies to 
\begin{align}
    \mathcal{P}(z) =& z^4 - \frac{D_2 D_4 \alpha_2^{2/3} \overline{m}_2^{14}}{V^{14}} z^2 + \frac{9D_4^2\overline{m}_2^{21}\sqrt{\overline{m}_2^2 - V^2 \alpha_2^2}}{V^{21}}z \nonumber\\
    &+ \frac{\lambda D_4^3 \overline{m}_2^{28}}{V^{28}}.
\end{align}
Key here is that the expression only depends on $\alpha_2$ and the constant $\overline{m}_2=\Omega_*=\omega/\eps$, and $\alpha_{2\sigma}=-\eps \alpha_2^{10/3}$ varies slowly for $0<\eps\ll1$.

As a result, we have an equation of the form 
\[
W_\sigma=\mathcal{A}(\eps \sigma;\lambda)W,
\]
equipped with ``clamped'' boundary conditions that force the first two components to vanish. Such boundary conditions are always transverse to stable and unstable eigenspaces to $\mathcal{A}$ due to the higher-order nature of the equation. The results in \cite{CRS} demonstrate that eigenvalues then accumulate at locations of the absolute spectrum of $\mathcal{A}(\tilde{\sigma};\lambda)$. In fact, transversality of the boundary subspaces implies that eigenvalues only accumulate at such locations. The absolute spectrum, in turn, is given by values of $\lambda$ and $\tilde{\sigma}$ where eigenvalues of $\mathcal{A}(\tilde{\sigma};\lambda)$ cannot be separated evenly by real part, that is, there does not exist a splitting of eigenvalues  $\nu_{1,2,3,4}$ (repeated by multiplicity) of the form 
\[
\Re\,\nu_{1,2}<\Re\,\nu_{3,4}.
\]
For fixed $\tilde{\sigma}$, the values of $\lambda$ where this splitting fails come in algebraic curves which terminate at double roots, $\nu_2=\nu_3$, say. In order to determine the most unstable point of the absolute spectrum, we assume that it is given by such a double root. These double roots can in fact be located explicitly. For this,
% Notice that this is only in terms of $D_2, D_4, V, \overline{m}_2$ (which are constant), and $\alpha_2$. 
consider $\mathcal{P}(z)= 0$ and set  $z=\rmi k$, with $k$ possibly complex, 
% as an ansatz ($k^*$ possibly complex) with the goal of finding eigenvalue $ik^*$ that solves the dispersion relation:
\begin{align}\label{e:disp}     
    \lambda =& -\frac{kV^7}{D_4^3 \overline{m}_2^{28}}\bigg(k^3V^{21} + kV^7 D_2 D_4 \alpha_2^{2/3}\overline{m}_2^{14} \nonumber\\
    &+ 9\rmi D_4^2\overline{m}_2^{21}\sqrt{\overline{m}_2^2 - V^2 \alpha_2^2}\bigg)\,.
\end{align}
Rather than searching for double roots $k$ directly in this equation, we notice that the equation is obtained via Fourier-Laplace transform, $A(t,r)=\rme^{\lambda t + \rmi k r}$, as the dispersion relation to 
\begin{equation}\label{e:4thpar}
A_t = aA_{rrrr} + b(\alpha_2)A_{rr} + c(\alpha_2)A_r,
\end{equation}
with
\begin{equation}
    \begin{aligned}
        a & = -\frac{V^{28}}{D_4^3\overline{m}_2^{28}}\,,\,
        b & = \frac{V^{14}D_2\alpha_2^{2/3}}{D_4^2\overline{m}_2^{14}}\,,\,
        c & = -\frac{9V^7\sqrt{\overline{m}_2^2 - V^2\alpha_2^2}}{D_4\overline{m}_2^7}\,.
    \end{aligned}
\end{equation}
Note that $a$, $c < 0$, and that $b<0$ follows from $D_2<0$. 
This fourth-order diffusion equation and its instabilities have been analyzed in the context of front invasion problems in \cite[\S2.11.1]{vansaarloos}; see also \cite[\S2.7]{avery2025selectionmechanismsinvasion} form. Summarizing the calculations there, we look for double roots as endpoints of the absolute spectrum crossing the imaginary axis and hence inducing an instability when $|c|<c_\mathrm{lin}$.  Here, $c_\mathrm{lin}$ is the linear spreading speed to the equation without advection, $A_t=a A_{rrrr}+b(\alpha_2)A_{rr}$. The spreading speed of instabilities in equations of this form is given through, in the notation of \cite{vansaarloos} where $\lambda=\rmi\omega$, the solution of the complex equation
\begin{equation}\label{eqn:wim-speed}
    \frac{\rmd \omega}{\rmd k} \bigg\rvert_{k_*} = \frac{\Im(\omega(k_*))}{\Im(k_*)}\,,
\end{equation}
where $\omega = \rmi(a k^4 + b k^2)$.
Since the right-hand side in \eqref{eqn:wim-speed} is real, we have 
\begin{equation}\label{eqn:zero-restr}
    \Im\left(\frac{{\rmd }\omega}{{\rmd } k} \bigg\rvert_{k_*}\right) = 0\,.
\end{equation}
We now solve for $k_*$. Setting $k_* = x + \rmi y$,~(\ref{eqn:zero-restr}) gives  $x = 0$ or $x = \sqrt{\frac{b + 6ay^2}{2a}}$. The case  $x=0$, implies that $y=0$ or $y=\sqrt{-\frac{b}{3a}}$, the latter of which breaks down when $D_2 < 0$, as $y\in\mathbb{R}$, and is thus irrelevant for us. Next, set $x = \sqrt{\frac{b + 6ay^2}{2a}}$.  We substitute into~(\ref{eqn:wim-speed}) to obtain the following set of solutions to $k_*$, 
\begin{equation}
    \begin{aligned}
        x & = \pm \sqrt{\frac{b(3 + \sqrt{7})}{8a}}\,,\qquad
        y & = \pm \sqrt{\frac{b}{24a}(\sqrt{7}-1)}\,.
    \end{aligned}
\end{equation}
% ask what the signs on x and y signify and what choosing the sign of y signifies
Choosing the negative root for $y$, we obtain that the spreading speed is
\begin{equation}
    c_\text{lin} := \frac{\Im(\omega(k_*))}{\Im(k_*)} = \frac{b(5 + \sqrt{7})}{9}\sqrt{\frac{b(1+\sqrt{7})}{a}}.
\end{equation}
Equating this with the term $c(\alpha_2)$ yields
% 
% Bringing back the linear term $c$ in the dispersion equation, we set our linear term $c$ equal to the spreading speed. Heuristically, this is the equivalent of pausing the system, as the wave movement from the linear term matches the spreading rate. Thus, solving the equation $s_\text{lin} = c$ for $\alpha_2$ will give an explicit expression for the onset of absolute growth and instability for a fixed $D_2 < 0$. Substituting both sides of the expression gives
\[\frac{(5+\sqrt{7})V^7D_2\alpha_2}{9D_4^2\overline{m}_2^7}\sqrt{-D_2D_4(1 + \sqrt{7})} = -\frac{9V^7 \sqrt{\overline{m}_2^2 - V^2\alpha_2^2}}{D_4 \overline{m}_2^7}\,,\]
and, after some elementary algebra 
\begin{equation}
    \alpha_{2,\mathrm{crit}} = \frac{81\overline{m}_2 D_4}{\sqrt{81^2D_4^2V^2 - 6D_2^3D_4(17+7\sqrt{7})}}\,, \qquad \overline{m}_2=\Omega_*.
\end{equation}
Substituting the leading-order expansion $\Omega_*=V/(R_\mathrm{i}\sin(\vartheta_\mathrm{i}))$, we find that $R_\mathrm{i,crit}=1/\alpha_\mathrm{i,crit}>R_\mathrm{i}$ precisely when 
\begin{equation}
D_2<D_{2,\mathrm{crit}}=-\sqrt[3]{\frac{81}{4}(7\sqrt{7}-17)D_4 V^2 \cot^2(\theta_\mathrm{i})}\leq 0.
\label{e:alphac}
\end{equation}
Clearly, $D_{2,\mathrm{crit}}=0$ when $\vartheta_\mathrm{i}=\pi/2$,  reflecting the absence of transport at the boundary when the filament is perpendicular to the boundary. More generally, for unstable $D_2<D_{2,\mathrm{crit}}$, we find a region near the core where the system is absolutely unstable. This reflects the observation that the increased curvature of the spiral near the core increases the pointwise growth rate of perturbations by reducing the speed of outward transport. 
% Note that $D_2 < 0$ guarantees the existence of a corresponding real $\alpha_2$. 
% In general, we have $c_\mathrm{lin}>c$ when $\alpha_2>\alpha_{2,\mathrm{crit}}$, which indicates that the region of absolute instability is given by a region near the core where $\alpha_2>\alpha_{2,\mathrm{crit}}$. 
% In particular, as we decrease $D_2$, instabilities and unstable eigenvalues first emerge when $\alpha_{2,\mathrm{crit}}=\alpha_{2,\mathrm{i}}$, given by the core radius, that is, when 
% \[
% R_\mathrm{i}=\hat{R}_\mathrm{i}/\eps=1/\alpha_{2,\mathrm{crit}}.
% \]
The relation \eqref{e:alphac} has some immediate monotonicity properties triggering eventually instabilities for fixed $D_2<0$, as:
\begin{enumerate}
    \item  $\Omega_*$ is decreased, by for instance choosing a more obtuse (increased) contact angle effects a decrease in  $|D_{2,\mathrm{crit}}|$;
    \item $V$ is decreased, which also decreases $\Omega_*$ linearly in $V$; 
    \item$D_4$ is decreased.
\end{enumerate}
Our results concern the point spectrum, only. The convective nature of the instability together with the possibility of singularity formation suggests that even in the absence of the unstable point spectrum, the linearly stable spirals are difficult to observe due to the unstable pseudospectrum and exponentially small basin boundaries; see for instance \cite{ssbasin} for an analysis in a simple model problem. 
% fig:scaling-direct-simulation
\begin{figure*}[h!]
    \centering
    \includegraphics[trim = {1.5cm, .65cm, 1.5cm, 1.2cm}, clip,width=0.8\linewidth]{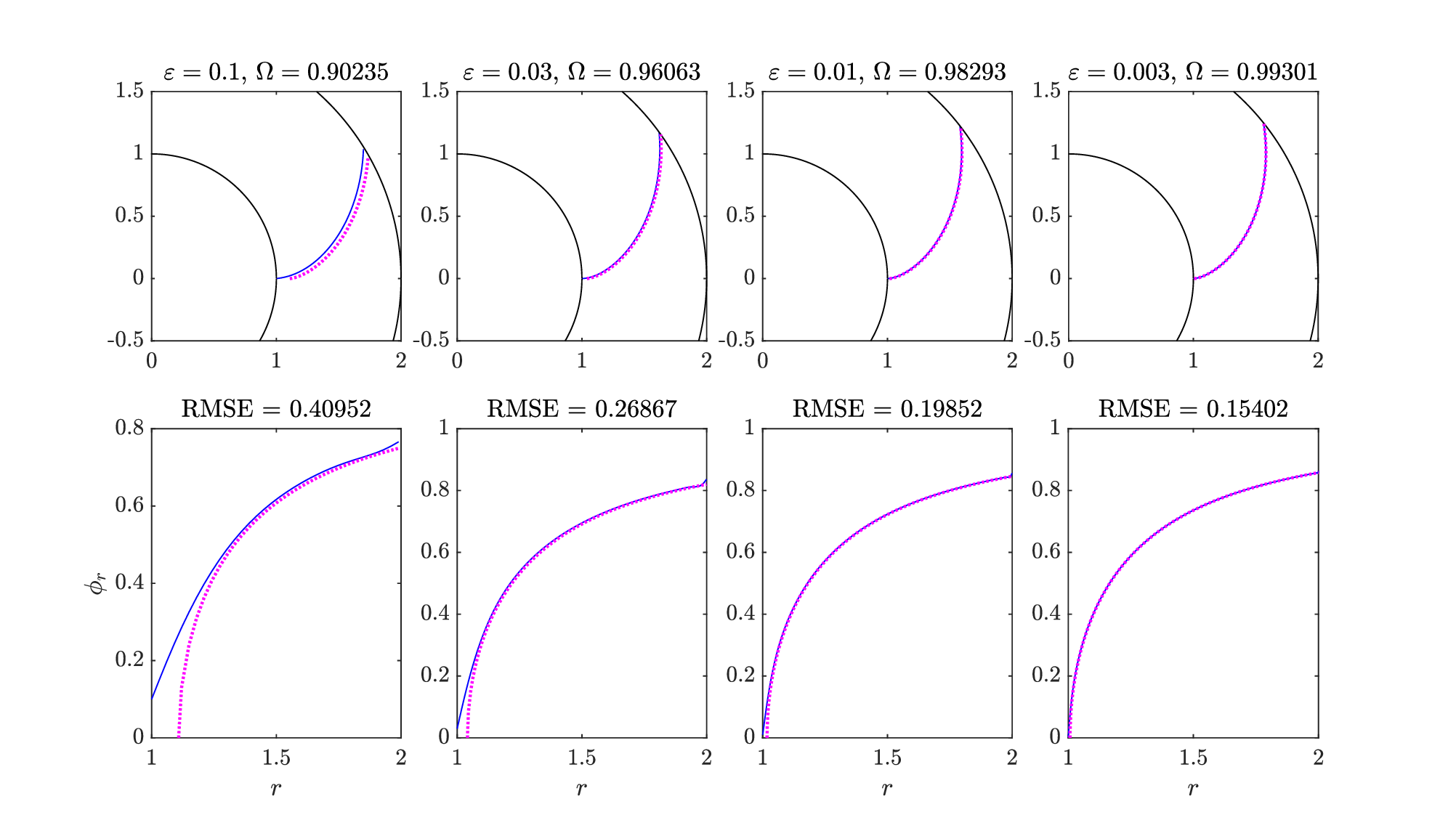}
    \caption{Equilibrium solution curves from direct simulations (top row) on a domain with $R_\mathrm{i}=1$, $R_\mathrm{i}=2$, $\vartheta_\mathrm{i}=\vartheta_\mathrm{o}=\pi/2$,  with $V=1$, $D_2=0.1\eps$, $D_4=\eps^3$, for values of $\varepsilon=0.1,0.03,0.01,0.003$ with measured values of $\Omega$ converging linearly in $\eps$ to the eikonal limit $\Omega=1$. 
   Computed profiles (blue) compared with the predicted, eikonal spiral profile (pink). Also shown is the graph of $\phi_r$ (bottom row) with discrepancies to the eikonal reference most dominant near the boundaries due to boundary layers at the critical contact angle; see also \cite{nanth}.}
    \label{fig:scaling-direct-simulation}
\end{figure*}
% \begin{figure}
%     \centering
%     \includegraphics[width=.7\linewidth]{Figures/order_of_convergence.eps}
%     \caption{Convergence of observed rotation rate with smaller $\varepsilon$; parameters as in Fig.~\ref{fig:pertConv} for various values of $\sigma=D_2/D_4$.}
%     \label{fig:order-of-convergence}
% \end{figure}

In the case of an unbounded domain, we suspect that the point spectrum induced by the instability at finite $\alpha_2$ is still present. Instability at larger radii is convective and perturbations appear to lead to singularities at sufficiently large distances from the core. We emphasize however that the growth rate of the instability decreases as $r\to\infty$ due to the factor $\alpha$ in the unstable coefficient $b(\alpha_2)$ in \eqref{e:4thpar}. More directly, at $r=\infty$, $\alpha_2=0$, the linearization has $b=0$ and is therefore (marginally) stable, regardless of the sign of $D_2$. In other words, investigating the essential spectrum by studying the linearization at $r=\infty$, only, misses the instability caused by negative surface tension, $D_2<0$, completely, reflecting the observation in \cite{sandstede2021spiral} that spectral mapping theorems do not hold for transverse instabilities of spirals.

In the next section, we show some numerical results corroborating and illustrating our predictions in Section~\ref{s:3} and Section~\ref{s:4}.

%%%%%%%%%%%%%%%%%%%%%%%%%%%%%%%%%%%%%%%%%%%%%%%%%%%%%%
\section{Numerical analysis and morphology of instabilities}\label{s:5}

We describe numerical simulations that corroborate our asymptotics, exhibit morphology of instabilities, and point to several open questions. We discretized \eqref{eq:2.6} using a second-order finite-difference approximation based on centered second-order stencils for the spatial derivatives. For time-stepping, we use a semi-implicit Euler scheme where  $\Phi_{rrrr}$ is treated implicitly and all other occurrences of $\Phi$ and its derivatives are treated explicitly. Boundary conditions are computed using ghost points.  Grid sizes are usually $dr=0.05$, scaled to reflect the size of $D_4$. Typical time steps are $dt=10^{-3}$ due to the strong nonlinearities.   

We first confirmed convergence to the eikonal solution when $R_\mathrm{i}$ is large, or, equivalently, $D_2,D_4\ll V=\rmO(1)$; see Figure~\ref{fig:scaling-direct-simulation}. We did see the predicted linear convergence in $\varepsilon$ of profiles and frequencies. 

% The convergence in fact depends on the value of $D_2$ as shown in Fig.~\ref{fig:order-of-convergence}. Interestingly, convergence is somewhat faster when $D_2$ is smaller, which we attribute to a weakening of the boundary layer that was studied in \cite{nanth}. 
% fig:pertConv
\begin{figure*}[h!]
    \centering
    \includegraphics[trim = {4.5cm, .7cm, 4.5cm, 0.7cm}, clip, width=0.8\linewidth]{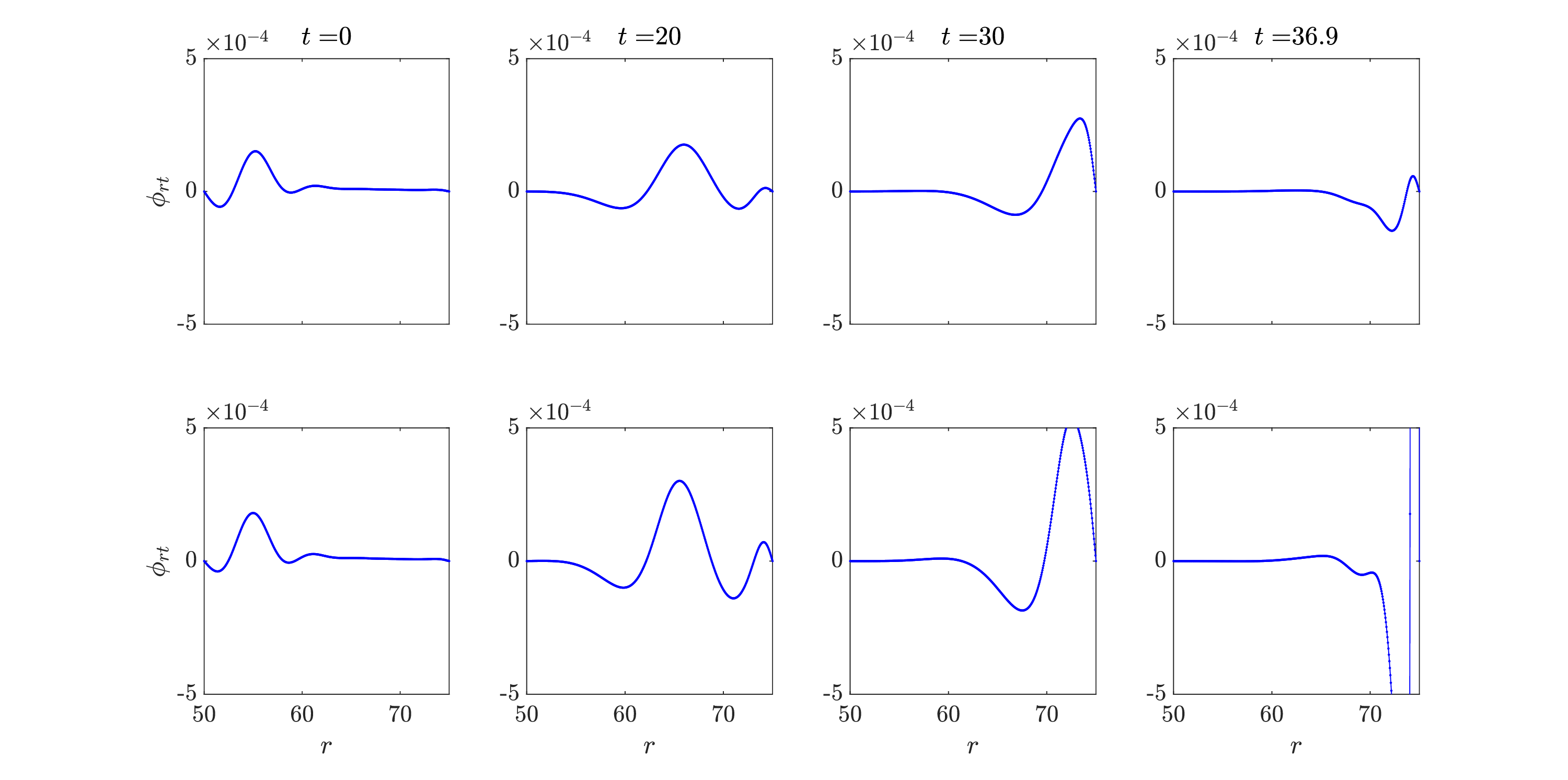}
    \caption{An initial Gaussian perturbation is advected to the outer boundary. Both time series are for $R_\mathrm{i} = 50, R_\mathrm{o} = 75$, $\vartheta_\mathrm{i}=\pi/2-0.1$, $D_4=V=1$, so that $D_{2,\mathrm{crit}}\sim-0.67$ according to \eqref{e:alphac}. 
    For $D_2 = -0.5$ (top) the perturbation decays as it appears to  pass through the boundary. Closer to instability, $D_2 = -0.6$ (bottom), the perturbation exhibits faster growth while it is advected and blows up once it reaches the boundary. 
    }
    \label{fig:pertConv}
\end{figure*}

We measured frequencies in direct simulations but also compared with frequencies determined using a Newton-method for the time-stepping scheme with phase condition $\int\phi=0$ and Lagrange multiplier $\omega$. We compared with multiple shooting methods for the ODE \eqref{eq:3.1}. While more straightforward to implement, are quite ill-behaved for small $\varepsilon$ due to the superexponential growth of perturbations to the initial value, despite the use of high-precision arithmetic. 

We also studied the transition to instability numerically. In these simulations, we used compatible boundary conditions with vanishing curvature and with contact angles $\vartheta_\mathrm{i}=\pi/2 - 10^{-3}$ and $\vartheta_\mathrm{o}$ according to \eqref{e:comp}. Instabilities always involve strong transport toward the boundary, which is convective at first, that is,  the maximum value of the perturbation grows exponentially as it is advected to the outer boundary where it eventually dies out. Past the transition to absolute instability, perturbations also grow pointwise. Figure~\ref{fig:pertConv} 
displays the transition from the eventual decay of perturbations to the eventual development of a singularity, as $D_2<0$ is decreased. We are plotting the second derivative $\Phi_{rt}$ to measure the relaxation of $\Phi_t$ while also eliminating the spatially constant rotation $\Phi_t=\omega$.  The instability appears to set in slightly before the threshold $D_{2,\mathrm{crit}}$. This may be due to $\rmO(\eps)$-corrections, due to the subcritical nature of the instability with basins of attraction exponentially small in the radius \cite{ssbasin}, or even due to the presence of boundary modes. 
% fig:bumpMove
\begin{figure*}[t!]
    \centering
    \includegraphics[trim={5.3cm 7.8cm 4.2cm 7.2cm},clip, width=0.8\linewidth]{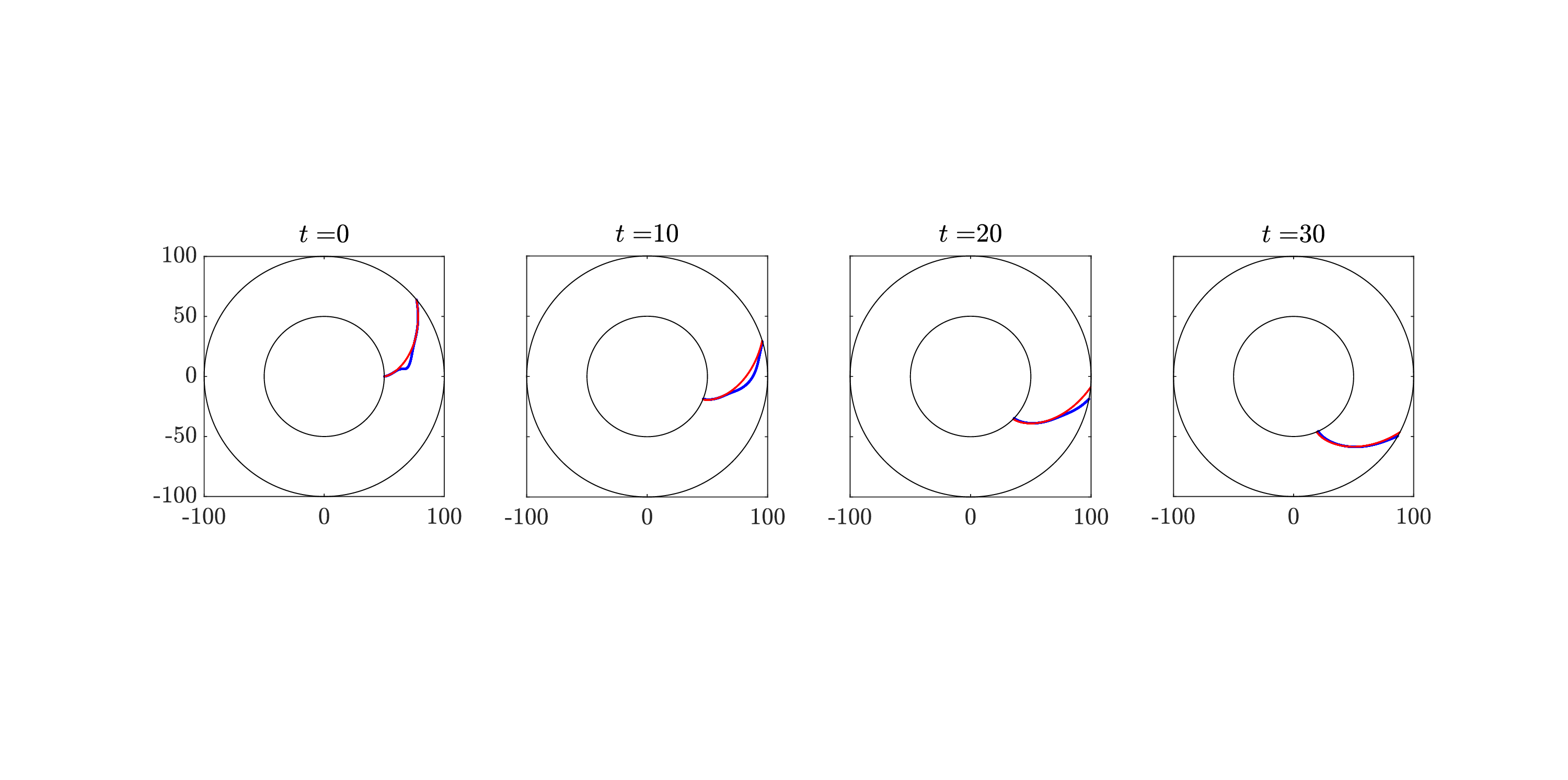}\\[0.2in]
     \includegraphics[trim={5.2cm, 1.cm, 4.5cm, 1.1cm}, clip, width=0.8\linewidth]{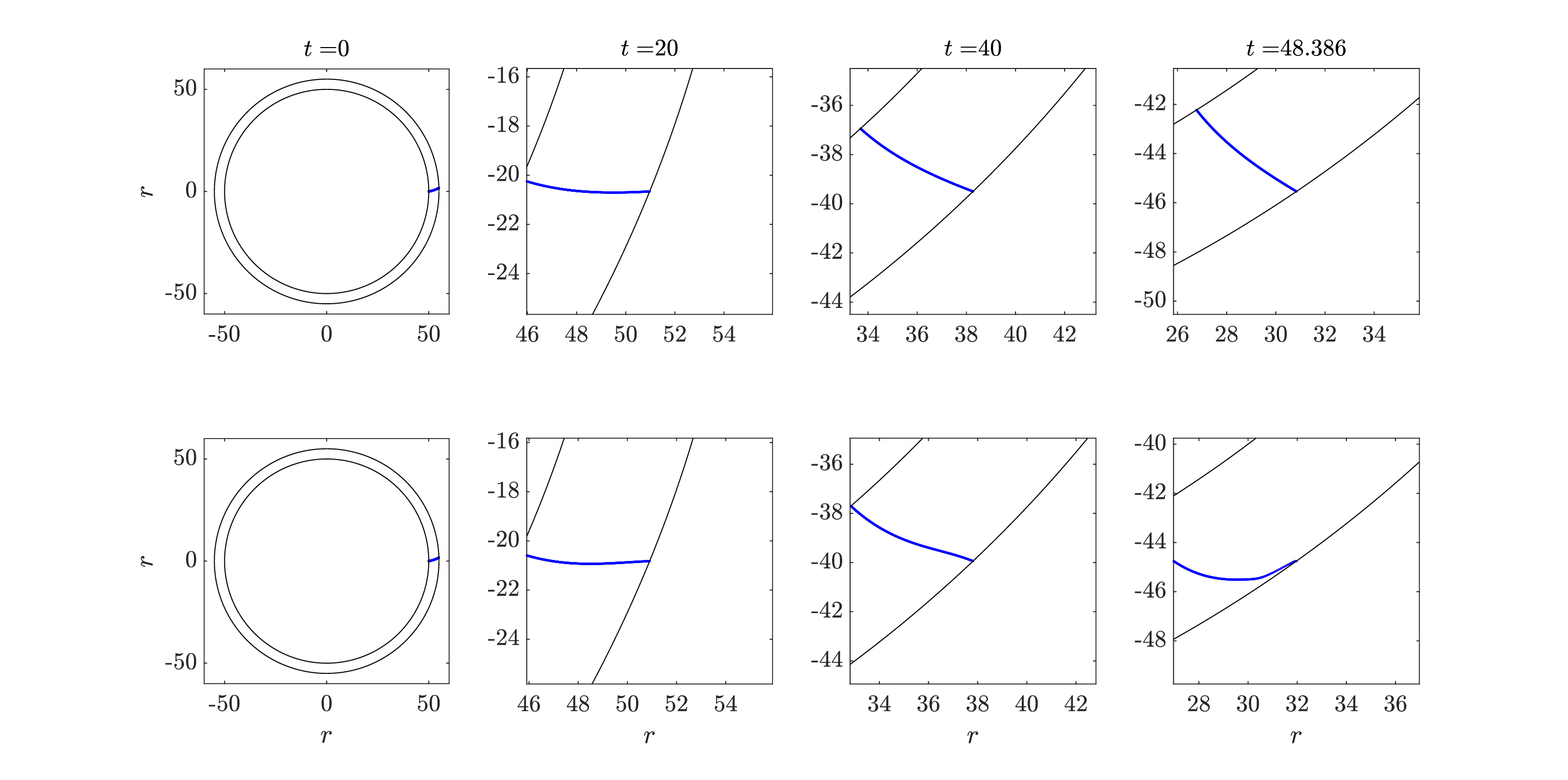}
    \caption{Top: Evolution of eikonal solution (red) with small Gaussian perturbation (blue) for parameters $V = 1, D_2 = 1, D_4 = 1, R_\mathrm{i} = 50, R_\mathrm{o} = 150$ (note only part of the domain is shown), which is firmly in the stable regime. The perturbation is advected towards and through the outer boundary while also diffusing.
    Middle and bottom row: Evolution of a perturbed eikonal solution with  $D_2 = -0.5$ (top) and $D_2 =-0.6$ (bottom) is shown in a thin annulus $r\in(50,55)$, illustrating profiles during convective stability and absolute instability, respectively, with blowup near the boundary in the bottom right.
    % for the rotating spiral solution we study, and initial condition of the eikonal solution perturbed by a small gaussian (blue) evolves. A fitted rotation of the eikonal solution (red) is plotted alongside this solution to better visual the evolution of the perturbation. 
    % \color{red}{There is a bit of a lie in this, In order to see the gaussian be advected, one has to simulate on a pretty big domain because as soon as a macroscopic perturbation gets near the boundary we get blow up. So I simulate with $R_\mathrm{i} = 50, R_0 = 150$, but only plot for $ R_\mathrm{i} = 50, R_0 = 100$. Unsure how/ if I should address this in the caption}
    }
    \label{fig:bumpMove}
\end{figure*}

Both stability and the formation of a singularity near the outer boundary are also shown in Figure~\ref{fig:bumpMove} for the actual spiral profile. Stability is seen through both diffusive decay and advection towards the boundary; instability as the formation of a corner on the outer boundary of the solution, $\phi$. Time series near criticality show how the perturbation eventually blows up near the boundary. Choosing a thin annulus, we avoided some of the difficulties with the subcritical nature of the instability and the ensuing very small basins of attraction near criticality. Indeed,   for both convective and absolute instabilities, the maximum value of the perturbation grows exponentially as it is advected to the outer boundary. Thus, if one wants to witness instability before a singularity develops, anywhere other than at the outer boundary, a rather thin annulus is needed. Even then the growth is most prevalent on the outer boundary.

We also demonstrate that the instability does indeed transition to absolute growth by plotting the amplitude of the perturbation,  summed in an inner core annulus, over time in Figure~\ref{fig:growthNearCore}.

% \begin{figure}[h!]
%     \centering
%     \includegraphics[trim={4.5cm, 1.cm, 4.5cm, 1.1cm}, clip, width=1\linewidth]{Figures/CornerForm.eps}
%     \caption{Evolution of a perturbed eikonal solution with different values of $D_2$, with   $D_2 = -0.5$ (top) and $D_2 =-0.6$ (bottom) is shown in a thin annulus $R_\mathrm{i} = 50$ and $ R_\mathrm{o} = 55$, illustrating profiles during convective stability and absolute instability, respectively, with blowup near the boundary in the bottom right.}
%     \label{fig:cornerForm}
% \end{figure}

We conclude this section noticing that for finite, not small, $\varepsilon$, one observes saddle-node bifurcations rather than the oscillatory convective and absolute instabilities we analyzed here. Some numerical observations of this saddle-node instability are shown in Figure~\ref{fig:inset-bifurcation}. It would be interesting to understand how the Hopf instabilities observed here transition into saddle-node bifurcations for finite $\varepsilon$.

% describe the numerical simulations performed to investigate the transition from convective to absolute instability and instability seen at the boundary. That is, those needed to generate figures \ref{fig:pertConv},\ref{fig:bumpMove}, 

% * comment on the infeasibility of shooting, ...

% * $\Omega$ convergence

% * Figure. 7?? convection, then instability of boundary layer??

% * discuss a bit the saddle node.

% fig:growthNearCore
\begin{figure*}
    \centering
    \includegraphics[trim={2cm 4cm 2cm 6.2cm},clip,width=0.8\linewidth]{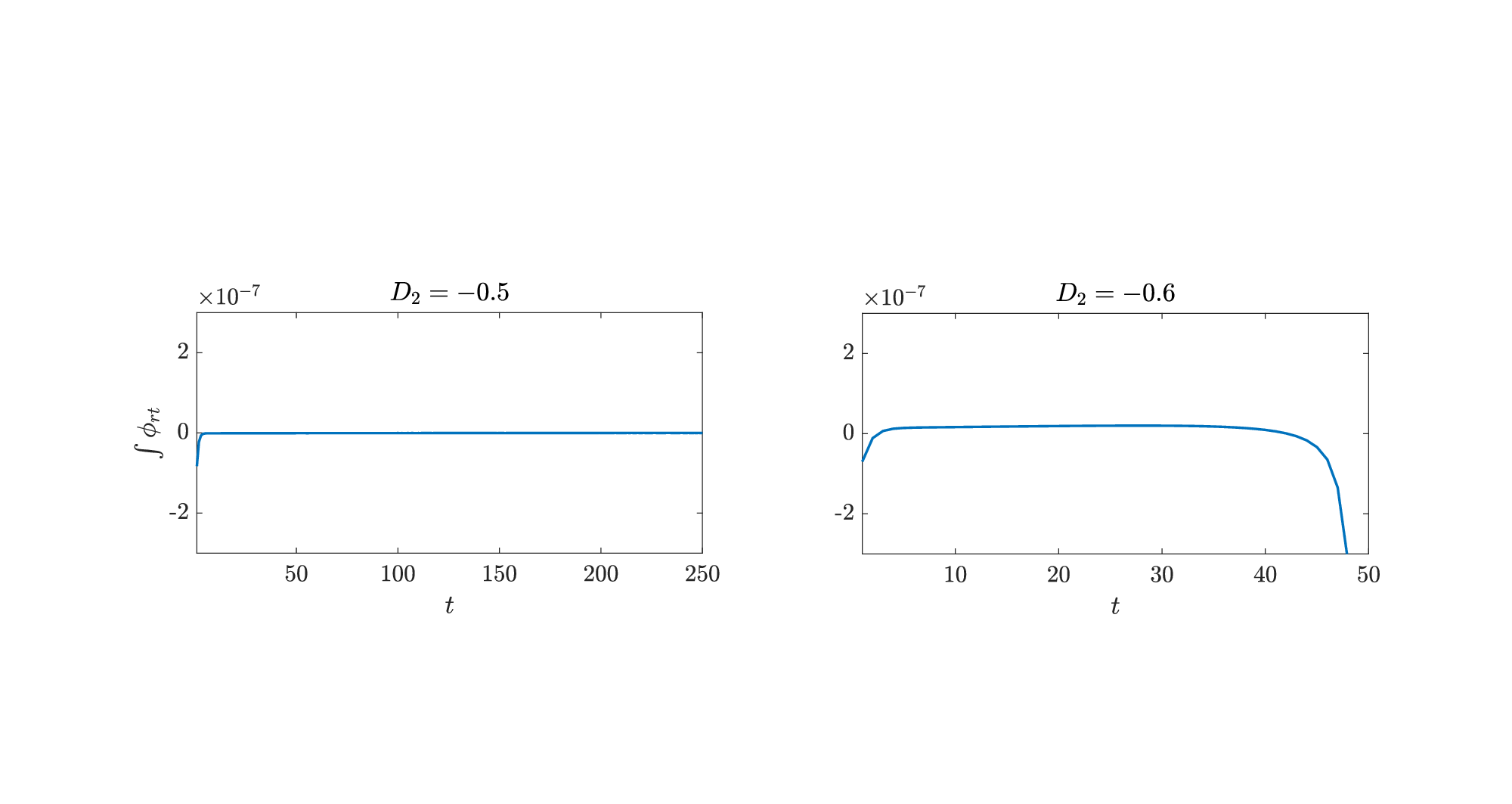}
    \caption{We simulate \eqref{eq:2.6} with initial condition chosen as exactly the eikonal solution and a small perturbation added near the inner boundary; parameters as in Figure~\ref{fig:bumpMove}. The integral of $\phi_{rt}$ over the inner quarter of the simulated annulus is plotted against time as a measure of the pointwise stability of the solution $\phi$ near the core.  For $D_2 = -0.5$, after some initial fluctuation, the integral value settles near zero suggesting stability near the core. For $D_2 = -0.6$ the same initial condition exhibits growth near the core and blows up in simulation, suggesting instability near the core, thus corroborating the theoretically predicted transition from convective to absolute instability.}
    \label{fig:growthNearCore}
\end{figure*}
% fig:inset-bifurcation
\begin{figure*}
    \centering
    \includegraphics[trim={2cm 1.cm 2cm 2.cm},clip,width=.8\linewidth]{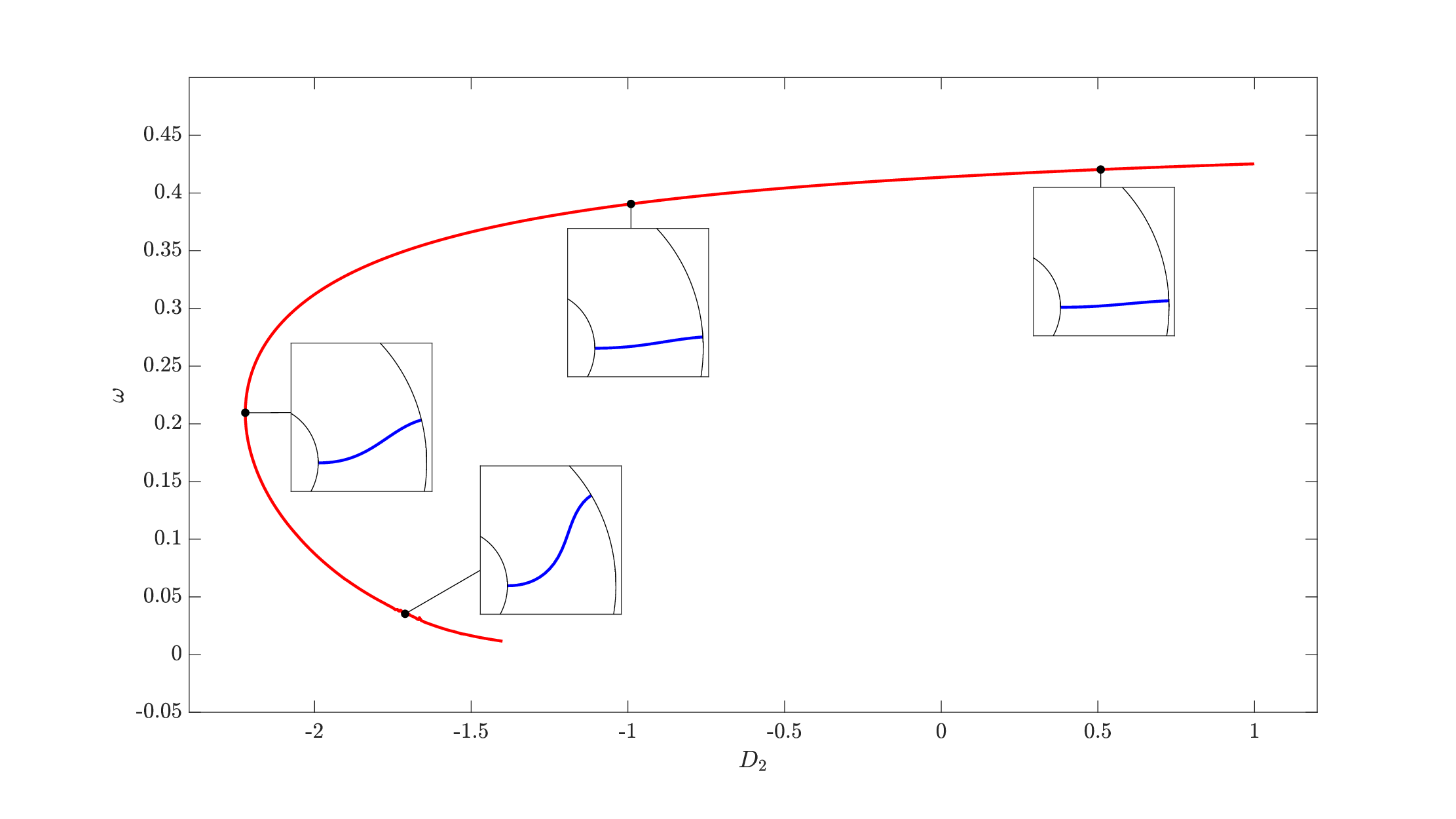}
    \caption{Numerical secant continuation in a shooting method was used to follow rigidly rotating spiral waves in a finite annular domain with parameters $D_4=V=R_\mathrm{i}=1$, $R_\mathrm{o}=3$, $\vartheta_\mathrm{i}=\vartheta_\mathrm{o}=\pi/2$, and with a starting value $D_2=1$. Insets display the computed spiral at selected points.}
    \label{fig:inset-bifurcation}
\end{figure*}

% \begin{table}
%     \centering
%     \begin{tabular}{@{}lrrrrrrrrr@{}}
%     \toprule
%     $\boldsymbol\varepsilon$ & .3 & .2 & .1 & .03 & .02 & .01 & .003 & .002 & .001 \\
%     \midrule
%     $\boldsymbol\Omega_d$ & .64158 & .69132 & .78179 & .90421 & .92943 & .95955 & .98701 & .992 & .99765 \\
%     \midrule
%     $\boldsymbol\Omega_s$ & .64081 & .72496 & .83677 &  \\
%     \bottomrule
%     \end{tabular}
%     \caption{$\varepsilon$ values against equilibrium $\Omega$ values, $\Omega_d$ for direct simulation and $\Omega_s$ for shooting. Shooting fails for $\varepsilon<0.055$, which gives $\omega = 0.89912$}
%     \label{tab:eps-omega}
% \end{table}

%%%%%%%%%%%%%%%%%%%%%%%%%%%%%%%%%%%%%%%%%%%%%%%%%%%%%%
\section{Discussion and open problems}\label{s:dis}
%%%%%%%%%%%%%%%%%%%%%%%%%%%%%%%%%%%%%%%%%%%%%%%%%%%%%%

Summarizing, we established the existence of rigidly rotating spirals in a model for geometric curve evolution by studying a singular perturbation from a somewhat explicit eikonal flow limit. Most interestingly, our existence result is valid in regions where one expects the spiral to be unstable due to a negative line tension term. We analyze this instability using again perturbation theory from the singular eikonal limit, exhibiting an eigenvalue problem with slowly varying coefficients. As a consequence, we predict eigenvalue clusters destabilizing the spiral wave at a critical parameter value where the instability changes from convective to absolute. The instability always originates near the core. In this sense, it \emph{precedes} a potential absolute instability at spatial infinity. 

The results can somewhat formally be compared to the analysis in \cite{ss_curvature}, where point spectrum clusters in spiral spectra more generally were analyzed. The authors there found eigenvalues accumulating at branch points of the dispersion relation of wave trains due to curvature effects. Since those branch points mark pointwise growth rates of perturbations of the wave trains, one can interpret the results there as determining whether curvature effects enhance or weaken pointwise growth. The case of enhancing pointwise growth is analogous to the situation we encounter here. It would be interesting to understand whether the possibility of curvature effects weakening pointwise growth can at all occur in the type of geometric model we considered here, although pointwise growth at infinity is invisible in the spectrum, due to a hypothetical branch point at $\pm\rmi\infty$. Analyzing such accumulation of spectra in a reaction-diffusion model, both theoretically and numerically may shed light on this question. 

In a different direction, it would be interesting to, at least numerically, study boundary layers and their instability. In fact, it is often difficult to determine whether in direct simulations perturbations blow up at the boundary, just because their amplitude is largest there, or because of an intrinsic instability nested right at the boundary. The difficulty of boundary layers is of course also reflected in the statement of our main results, which assumes ``compatible'' boundary conditions, excluding a quite natural case of perpendicular contact angles. An analysis of this critical case and the associated singular perturbation problem appears to pose some quite interesting challenges. 

%%%%%%%%%%%%%%%%%%%%%%%%%%%%%%%%%%%%%%%%%%%%%%%%%%%%%%

%%%%%%%%%%%%%%%%%%%%%%%%%%%%%%%%%%%%%%%%%%%%%%%%%%%%%%
% Literature
%%%%%%%%%%%%%%%%%%%%%%%%%%%%%%%%%%%%%%%%%%%%%%%%%%%%%%
% \bibliographystyle{abbrv}
% \bibliography{p2}

\begin{thebibliography}{10}

\bibitem{agladze2007interaction}
K.~Agladze, M.~W. Kay, V.~Krinsky, and N.~Sarvazyan.
\newblock Interaction between spiral and paced waves in cardiac tissue.
\newblock {\em American Journal of Physiology-Heart and Circulatory Physiology}, 293(1):H503--H513, 2007.

\bibitem{MR3604605}
M.~Aguareles, I.~Baldom\`a, and T.~M-Seara.
\newblock On the asymptotic wavenumber of spiral waves in {$\lambda-\omega$} systems.
\newblock {\em Nonlinearity}, 30(1):90--114, 2017.

\bibitem{aguareles2023rigorous}
M.~Aguareles, I.~Baldom{\'a}, and T.~M. Seara.
\newblock A rigorous derivation of the asymptotic wavenumber of spiral wave solutions of the complex ginzburg-landau equation.
\newblock {\em arXiv preprint arXiv:2306.02191}, 2023.

\bibitem{MR4151207}
M.~Aguareles, S.~J. Chapman, and T.~Witelski.
\newblock Dynamics of spiral waves in the complex {G}inzburg-{L}andau equation in bounded domains.
\newblock {\em Phys. D}, 414:132699, 23, 2020.

\bibitem{Aschwanden2018}
M.~J. Aschwanden, F.~Scholkmann, W.~B{\'e}thune, W.~Schmutz, V.~Abramenko, M.~C.~M. Cheung, D.~M{\"u}ller, A.~Benz, G.~Chernov, A.~G. Kritsuk, J.~D. Scargle, A.~Melatos, R.~V. Wagoner, V.~Trimble, and W.~H. Green.
\newblock Order out of randomness: Self-organization processes in astrophysics.
\newblock {\em Space Science Reviews}, 214(2):55, Mar 2018.

\bibitem{avery2025selectionmechanismsinvasion}
M.~Avery, M.~Holzer, and A.~Scheel.
\newblock Selection mechanisms in front invasion, 2025.

\bibitem{bernoff}
A.~J. Bernoff.
\newblock Spiral wave solutions for reaction-diffusion equations in a fast reaction/slow diffusion limit.
\newblock {\em Phys. D}, 53(1):125--150, 1991.

\bibitem{brunovsky_exchange}
P.~Brunovsk\'y.
\newblock Tracking invariant manifolds without differential forms.
\newblock {\em Acta Math. Univ. Comenian. (N.S.)}, 65(1):23--32, 1996.

\bibitem{CRS}
P.~Carter, J.~D.~M. Rademacher, and B.~Sandstede.
\newblock Pulse replication and accumulation of eigenvalues.
\newblock {\em SIAM J. Math. Anal.}, 53(3):3520--3576, 2021.

\bibitem{MR1888641}
K.-S. Chou and X.-P. Zhu.
\newblock {\em The curve shortening problem}.
\newblock Chapman \& Hall/CRC, Boca Raton, FL, 2001.

\bibitem{MR4215710}
J.-Y. Dai.
\newblock Ginzburg-{L}andau spiral waves in circular and spherical geometries.
\newblock {\em SIAM J. Math. Anal.}, 53(1):1004--1028, 2021.

\bibitem{engelnagel}
K.-J. Engel and R.~Nagel.
\newblock {\em One-parameter semigroups for linear evolution equations}, volume 194 of {\em Graduate Texts in Mathematics}.
\newblock Springer-Verlag, New York, 2000.
\newblock With contributions by S. Brendle, M. Campiti, T. Hahn, G. Metafune, G. Nickel, D. Pallara, C. Perazzoli, A. Rhandi, S. Romanelli and R. Schnaubelt.

\bibitem{MR3556834}
B.~Ermentrout and B.~I.~S. van~der Ventel.
\newblock Rotating waves in simple scalar excitable media: approximations and numerical solutions.
\newblock {\em J. Math. Biol.}, 73(6-7):1321--1351, 2016.

\bibitem{fen}
N.~Fenichel.
\newblock Geometric singular perturbation theory for ordinary differential equations.
\newblock {\em J. Differential Equations}, 31(1):53--98, 1979.

\bibitem{MR2341216}
B.~Fiedler, M.~Georgi, and N.~Jangle.
\newblock Spiral wave dynamics: reaction and diffusion versus kinematics.
\newblock In {\em Analysis and control of complex nonlinear processes in physics, chemistry and biology}, volume~5 of {\em World Sci. Lect. Notes Complex Syst.}, pages 69--114. World Sci. Publ., Hackensack, NJ, 2007.

\bibitem{frankel1987nonlinear}
M.~Frankel and G.~Sivashinsky.
\newblock On the nonlinear thermal diffusive theory of curved flames.
\newblock {\em Journal de physique}, 48(1):25--28, 1987.

\bibitem{MR588502}
J.~M. Greenberg.
\newblock Spiral waves for {$\lambda -\omega $} systems.
\newblock {\em SIAM J. Appl. Math.}, 39(2):301--309, 1980.

\bibitem{MR665385}
P.~S. Hagan.
\newblock Spiral waves in reaction-diffusion equations.
\newblock {\em SIAM J. Appl. Math.}, 42(4):762--786, 1982.

\bibitem{hagberg_nib}
A.~Hagberg and E.~Meron.
\newblock {Complex patterns in reaction‐diffusion systems: A tale of two front instabilities}.
\newblock {\em Chaos: An Interdisciplinary Journal of Nonlinear Science}, 4(3):477--484, 09 1994.

\bibitem{hetebrij2021parameterization}
W.~A. Hetebrij.
\newblock Parameterization for center manifolds: Existence, rigorous bounds, and applications.
\newblock 2021.

\bibitem{MR1629103}
R.~Ikota, N.~Ishimura, and T.~Yamaguchi.
\newblock On the structure of steady solutions for the kinematic model of spiral waves in excitable media.
\newblock {\em Japan J. Indust. Appl. Math.}, 15(2):317--330, 1998.

\bibitem{JKK_exchange}
C.~K. R.~T. Jones, T.~J. Kaper, and N.~Kopell.
\newblock Tracking invariant manifolds up to exponentially small errors.
\newblock {\em SIAM J. Math. Anal.}, 27(2):558--577, 1996.

\bibitem{JK_exchange}
C.~K. R.~T. Jones and N.~Kopell.
\newblock Tracking invariant manifolds with differential forms in singularly perturbed systems.
\newblock {\em J. Differential Equations}, 108(1):64--88, 1994.

\bibitem{krinsky1978mathematical}
V.~Krinsky.
\newblock Mathematical models of cardiac arrhythmias (spiral waves).
\newblock {\em Pharmacology \& Therapeutics. Part B: General and Systematic Pharmacology}, 3(4):539--555, 1978.

\bibitem{li2024anchoredspiralsdrivencurvature}
N.~Li and A.~Scheel.
\newblock Anchored spirals in the driven curvature flow approximation, 2024.
\newblock arxiv.org/abs/2312.07809.

\bibitem{nanth}
N.~Li and A.~Scheel.
\newblock Existence and stability of anchored spiral waves in phase oscillators.
\newblock {\em In preparation}, 2025.

\bibitem{MIKHAILOV1991379}
A.~Mikhailov and V.~Zykov.
\newblock Kinematical theory of spiral waves in excitable media: Comparison with numerical simulations.
\newblock {\em Physica D: Nonlinear Phenomena}, 52(2):379--397, 1991.

\bibitem{MR1257848}
A.~S. Mikhailov, V.~A. Davydov, and V.~S. Zykov.
\newblock Complex dynamics of spiral waves and motion of curves.
\newblock {\em Phys. D}, 70(1-2):1--39, 1994.

\bibitem{MR2235817}
L.~M. Pismen.
\newblock {\em Patterns and interfaces in dissipative dynamics}.
\newblock Springer Series in Synergetics. Springer-Verlag, Berlin, 2006.
\newblock With a foreword by Y. Pomeau.

\bibitem{ssabs}
B.~Sandstede and A.~Scheel.
\newblock Absolute and convective instabilities of waves on unbounded and large bounded domains.
\newblock {\em Phys. D}, 145(3-4):233--277, 2000.

\bibitem{ssbasin}
B.~Sandstede and A.~Scheel.
\newblock Basin boundaries and bifurcations near convective instabilities: a case study.
\newblock {\em J. Differential Equations}, 208(1):176--193, 2005.

\bibitem{ss_curvature}
B.~Sandstede and A.~Scheel.
\newblock Curvature effects on spiral spectra: generation of point eigenvalues near branch points.
\newblock {\em Phys. Rev. E (3)}, 73(1):016217, 8, 2006.

\bibitem{sandstede2021spiral}
B.~Sandstede and A.~Scheel.
\newblock Spiral waves: linear and nonlinear theory.
\newblock {\em Mem. Amer. Math. Soc.}, 285(1413):v+126, 2023.

\bibitem{scheelspiral}
A.~Scheel.
\newblock Bifurcation to spiral waves in reaction-diffusion systems.
\newblock {\em SIAM J. Math. Anal.}, 29(6):1399--1418, 1998.

\bibitem{ssradial}
A.~Scheel.
\newblock Radially symmetric patterns of reaction-diffusion systems.
\newblock {\em Mem. Amer. Math. Soc.}, 165(786):viii+86, 2003.

\bibitem{SIVASHINSKY19771177}
G.~Sivashinsky.
\newblock Nonlinear analysis of hydrodynamic instability in laminar flames—i. derivation of basic equations.
\newblock {\em Acta Astronautica}, 4(11):1177--1206, 1977.

\bibitem{MR982387}
J.~J. Tyson, K.~A. Alexander, V.~S. Manoranjan, and J.~D. Murray.
\newblock Spiral waves of cyclic {AMP} in a model of slime mold aggregation.
\newblock {\em Phys. D}, 34(1-2):193--207, 1989.

\bibitem{vansaarloos}
W.~Van~Saarloos.
\newblock Front propagation into unstable states.
\newblock {\em Physics reports}, 386(2-6):29--222, 2003.

\bibitem{YAMADA1993153}
H.~Yamada and K.~Nozaki.
\newblock Dynamics of wave fronts in excitable media.
\newblock {\em Physica D: Nonlinear Phenomena}, 64(1):153--162, 1993.

\end{thebibliography}

\bibliographystyle{abbrv}

\end{document}